\newtheorem{theorem}{Theorem}[section]
\newtheorem{lemma}[theorem]{Lemma}
\theoremstyle{definition}
\newtheorem{definition}[theorem]{Definition}
\theoremstyle{remark}
\numberwithin{equation}{section}
\newcommand{\Asub}{A_{\rm{sub}}}
\newcommand{\Aobj}{A_{\rm{obj}}}
\newcommand{\bea}{\begin{eqnarray}}
\newcommand{\eea}{\end{eqnarray}}
\begin{document}

\title{Mathematics of Floating 3D Printed Objects}

\author[D.M. Anderson]{Daniel M. Anderson}
\address{Department of Mathematical Sciences, George Mason University, Fairfax, Virginia 22030}
\email{danders1@gmu.edu}

\author[B.G. Barreto-Rosa]{Brandon G. Barreto-Rosa}
\address{Department of Mathematical Sciences, George Mason University, Fairfax, Virginia 22030}
\email{bbarreto@gmu.edu}

\author[J.D. Calvano]{Joshua D. Calvano}
\address{Department of Mathematical Sciences, George Mason University, Fairfax, Virginia 22030}
\email{jcalvano@gmu.edu}

\author[L. Nsair]{Lujain Nsair}
\address{Department of Mathematical Sciences, George Mason University, Fairfax, Virginia 22030}
\email{lnsair@gmu.edu}

\author[E. Sander]{Evelyn Sander}
\address{Department of Mathematical Sciences, George Mason University, Fairfax, Virginia 22030}
\email{esander@gmu.edu}

\subjclass[2020]{Primary 54C40, 14E20; Secondary 46E25, 20C20}
\date{January 1, 1994 and, in revised form, June 22, 1994.}


\keywords{Center of Gravity, Center of Buoyancy, Archimedes' Principle}

\begin{abstract}
We explore the stability of floating objects through mathematical
modeling and experimentation. Our models are based on standard ideas of
center of gravity, center of buoyancy, and Archimedes' Principle.  We
investigate a variety of floating shapes with two-dimensional cross sections and identify
analytically and/or computationally a potential energy landscape that
helps identify stable and unstable floating orientations.  We compare
our analyses and computations to experiments on floating objects
designed and created through 3D printing.  In addition to our results,
we provide code for testing the floating configurations for new shapes, 
as well as giving details of the methods for 3D printing the objects.
The paper includes conjectures and open problems for further study.
\end{abstract}

\maketitle

\tableofcontents


\section{Introduction}

Interest in the dynamics of icebergs has been driven by the desire to describe various natural phenomena (including their rolling) as well as
practical considerations associated with shipping and protection of offshore structures in arctic environments.
Allaire's \cite{Allaire1972} study of iceberg stability was motivated by the need to assess mitigation strategies, such as towing of icebergs, to reduce threats posed by icebergs to offshore structures.
Allaire identified readily-identifiable above-water characteristics, such as the ratio of waterline width to above-water height, to estimate stability for a menagerie of iceberg shapes -- Blocky, Drydock, Dome, Pinnacled, Tabular, Growler.
Bailey \cite{Bailey1994}, motivated by similar concerns, examined stability of icebergs in terms of rolling frequency.
The potential for iceberg stability considerations to be dynamic even in calm water due to underwater melting/dissolution was considered by Deriabyn \& Hjorth \cite{DH2009}.
They were also interested to identify what practical, above-water, observations could be made to predict stability changes driven by underwater changes in iceberg morphology.

Ship design and the design of other man-made floating objects has no doubt driven much scientific
and technical work in this field.
We make no attempt to review this literature but interested readers may find resources in the work
of M\'{e}gel \& Kliava \cite{MK2010} and Wilczynski \& Diehl \cite{WD1995}.
Historically speaking, scientific thought on this dates back to Archimedes (c.~287--212/211 B.C.).
See for example Rorres \cite{Rorres2004}, a 
fascinating article on the original work of Archimedes and extensions thereof.

Whereas icebergs and ships are complex three dimensional floating
objects, the floating objects that are the focus of the present work are
those whose configurations can effectively be characterized in
two-dimensions. Specifically, we shall consider `long' objects whose
cross sections are constant for the full length of the object. 
Such shapes have been the
focus of popular online Apps, such as Iceberger~\cite{Iceberger} and the remixed
version~\cite{IcebergerRemix}. 
These
effective two-dimensional floating objects have proven to be
mathematically tractable yet rich in observable phenomena. Prediction of
the stable orientations for a long beam with square-cross section and
uniform density, for example, was considered by Reid \cite{Reid1963}.
Depending on the ratio of the density of the object to the density of
the fluid any rotation of the square can be a stable floating
orientation (i.e.~any orientation from flat side up to corner up). This
configuration has been recently revisited both experimentally and
analytically by Feigel and Fuzailov \cite{FF2021}. These authors
validated experimentally that for a small range of density ratios near
$0.25$ (and also similarly near $0.75$) the full range of stable
orientations can be realized.  Another view of this which we discuss in
more detail in our work is that owing to the four-fold symmetry of the
square, there can be either four or eight stable orientations of the
floating square depending on the density ratio.  We explore how these
results change for what we denote as `off-center' squares.

The case of a long `floating plank' of rectangular cross section was the
focus of work by Delbourgo \cite{Delbourgo1987}. Here, in addition to
the density ratio, another parameter -- the aspect ratio of the
rectangle -- appears. Delbourgo identified in this density ratio vs.
aspect ratio space the existence of six characteristic floating
configurations in terms of (1) long side up or short side up, (2) top
side parallel or not parallel to the waterline, and (2) the number of
submerged vertices. Further studies have explored other cross sectional
shapes and investigated details of the breaking of left-right symmetries
of the floating shapes as the density ratio is varied (e.g.~see
Erd\"{o}s, Schibler, \& Herndon \cite{Erdos_etal1992_part1} for square
and equilateral triangle cross sections and Erd\"{o}s, Schibler, \&
Herndon \cite{Erdos_etal1992_part2} for the three-dimensional shapes of
the cube, octahedron, and decahedron). Work in this area has focused on
homogeneous objects with uniform density.  We shall relax this
assumption in our present work to include some special classes of
non-uniform density in which the center of gravity of the square no
longer resides at its centroid (cf. Definition~\ref{defcog}). 

An excellent review of many important mathematical ideas -- Archimedes'
Principle, Center of Gravity, Center of Buoyancy, and the notion of
Metacenter -- related to floating objects is the work of Gilbert
\cite{Gilbert1991}. One particularly useful concept is that of the
potential energy of floating objects.   Gilbert shows that if one
can compute the potential energy landscape as a function of all possible
orientations of the object one can identify stable floating
configurations by the locations of local minima of the potential energy
function.  This is one of the main objectives of our computations as we
then use this to identify stable orientations.

The present work shares some of the spirit of the paper by Feigel and
Fuzailov \cite{FF2021} to revisit these questions both theoretically and
experimentally.  Our experiments, however, are conducted with 3D
printed shapes.  While some of our theoretical effort has been on objects with square cross
section, our methodology is motivated by the
recognition that 3D printing offers the opportunity to float objects
whose cross sections are effectively limited only by ones own creativity
in defining new shapes.   An example shape we analyze is the `Mason M' shown in Figure~\ref{fig-intro_fig_MasonM}.

\begin{figure}[tb]
\includegraphics[width=0.8\textwidth]{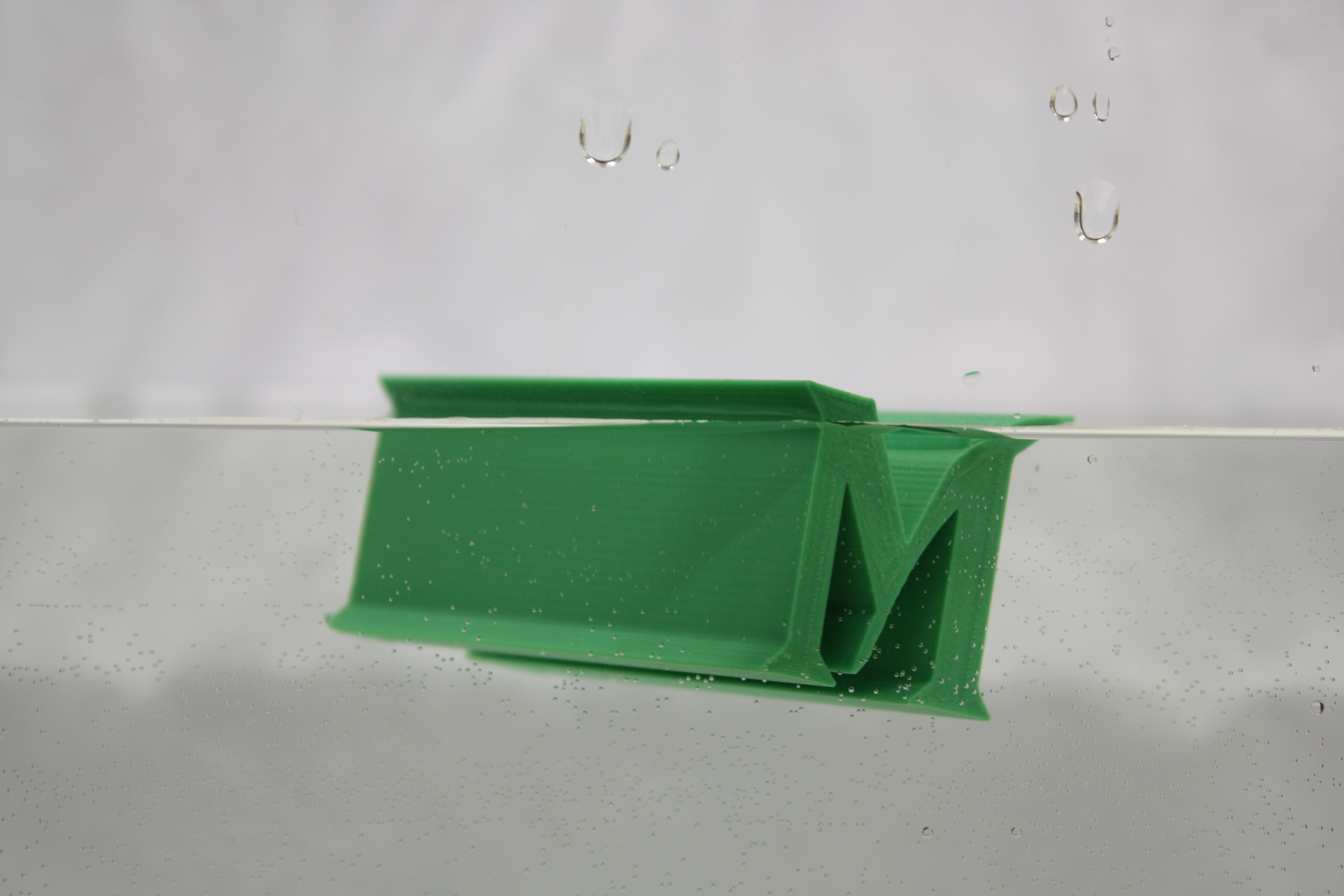}
\caption{The floating Mason M. }
\label{fig-intro_fig_MasonM}
\end{figure}
%

The new contributions we make here are (1) predicting floating configurations of an object 
with square cross section when the center of gravity is not at the object's centroid, 
(2) predicting floating configurations for objects of general two-dimensional polygonal cross section,
and (3) verifying the predictions of (1) and (2) as well as classical ones for objects of square 
cross section with experiments conducted with 3D printed objects.

This paper is organized as follows. In Section~\ref{sec:math}, we introduce definitions and 
terminology, outline prior results for long floating objects of uniform density with 
square cross section, and state our analytical results for the case of long floating 
objects with square cross section with non-uniform density. 
In Section~\ref{sec:3DPrinting} we give a detailed description of how to 3D print floating objects. 
This section contains sufficient details and accompanying code so that interested readers would be 
able to print their own objects and perform their own experiments.
Note that we have additionally provided codes and files in a GitHub repository~\cite{GITHUB}. 
In Section~\ref{sec:Experiments} we describe how we obtain experimental results on 
our 3D printed floating objects.
In Section~\ref{sec:compexp} we describe our code for computing 
stable floating configurations for objects with general polygonal cross sections.
In Section~\ref{sec:Results} we describe the results of our floating experiments and 
relate them to our theoretical results for several cases involving square cross sections
and a selected case of a nontrivial cross section. 
In Section~\ref{sec:conclusion} we give our conclusions and discuss some open problems.


\section{Mathematical  Models for Floating Objects}
\label{sec:math}

\subsection{Definitions}

In this section, we give some basic definitions and concepts that will be needed for 
the remainder of our analysis.

\begin{definition}[Center of Gravity]\label{defcog}
If the mass distribution is given by a continuous 
 density function of $\rho(x,y,z)$ within a domain $\Omega$, 
 then the center of gravity can be obtained by     
 $$(G_x,G_y,G_z)=\frac{1}{M_{\rm{obj} }}\iiint_{\Omega}(x,y,z) \, \rho(x,y,z) \; dV \;,$$
where $M_{\rm{obj}}$ is the object's mass.
In the case of uniform density, i.e. $\rho$ is a constant independent of $x,y$ and $z$, 
the center of gravity is called the centroid. 
\end{definition}
For a long object of length $L$ with uniform cross section, 
 with uniform density in the long direction, i.e. $\rho(x,y,z) = \rho(x,y)$  is independent of $z$, 
the center of gravity is given by
\[
\vec{G} = (G_x,G_y)= \frac{L}{M_{\rm{obj} }}  \iint_\Omega (x,y) \rho(x,y) \; dA 
\]
with $G_z = L/2$ (relative to one end of the object). 
\begin{lemma}
For an object of length $L$ with a polygonal cross-section   with uniform
constant density $\rho$, we can compute the area and center of gravity  as
sums involving only the vertices of the polygon. In particular, let 
\[\{ (x_1,y_1),\dots,(x_N,y_N),(x_1,y_1) \}
\] 
be the vertices of the cross section polygon,
 oriented  counterclockwise. Then the mass of the object is 
\[
M_{\rm{obj}} = \rho L A \;, 
\]
where the area $A$  of the polygon of cross section is given by 
\bea\label{greenarea}
A = \frac{1}{2} \sum_{k = 1}^N (x_k + x_{k+1}) (y_{k+1}- y_k) \;, 
\eea
a result also known as the shoelace formula, and the center of gravity $\vec{G} = (G_x,G_y)$ is given by 
\begin{eqnarray}\label{greencenter}
G_x &=& \frac{1}{6 A}  \sum_{k = 1}^N  (x^2_k + x_k x_{k+1} + x^2_{k+1}) (y_{k+1}- y_k) \\
G_y &=& \frac{1}{6 A} \sum_{k = 1}^N  - (y^2_k + y_k y_{k+1} + y^2_{k+1}) (x_{k+1}- x_k)  \, .  \nonumber 
\end{eqnarray}

\end{lemma}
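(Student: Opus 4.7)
The plan is to reduce each of the three formulas to a two-dimensional integral over the polygonal cross-section $\Omega$ and then apply Green's theorem to rewrite those area integrals as line integrals around the boundary $\partial\Omega$. Since the density is constant and the object is uniform in the $z$-direction, the mass formula $M_{\rm obj} = \rho L A$ follows immediately from $M_{\rm obj} = \rho \cdot L \cdot A$ once $A$ is established, and the two centroid formulas reduce to computing $\iint_\Omega x\,dA$ and $\iint_\Omega y\,dA$ and dividing by $A$. So the entire lemma is really three applications of the same idea: choose a vector field whose curl (in the Green's-theorem sense) is the integrand of interest, then evaluate the resulting line integral segment by segment around the polygon.

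For the area, I would pick $P = 0$, $Q = x$, so that $\partial_x Q - \partial_y P = 1$ and Green's theorem gives $A = \oint_{\partial\Omega} x\,dy$. For $G_x$ I would use $P = 0$, $Q = x^2/2$, giving $\iint_\Omega x\,dA = \oint_{\partial\Omega} \tfrac{1}{2}x^2\,dy$, and for $G_y$ the dual choice $P = -y^2/2$, $Q = 0$, giving $\iint_\Omega y\,dA = -\oint_{\partial\Omega}\tfrac{1}{2}y^2\,dx$. Each use requires the boundary to be a positively oriented, piecewise-smooth, simple closed curve, which is exactly the counterclockwise polygon assumed in the hypothesis.

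Next, on each edge from $(x_k,y_k)$ to $(x_{k+1},y_{k+1})$ I would use the straight-line parametrization $(x(t),y(t)) = (1-t)(x_k,y_k) + t(x_{k+1},y_{k+1})$ for $t \in [0,1]$, so that $dx = (x_{k+1}-x_k)\,dt$ and $dy = (y_{k+1}-y_k)\,dt$. The area edge integral then becomes $\int_0^1 [(1-t)x_k + t x_{k+1}](y_{k+1}-y_k)\,dt = \tfrac{1}{2}(x_k + x_{k+1})(y_{k+1}-y_k)$, which reproduces (\ref{greenarea}) after summing over $k$. For $G_x$ the edge integral involves $\int_0^1[(1-t)x_k + t x_{k+1}]^2\,dt = \tfrac{1}{3}(x_k^2 + x_k x_{k+1} + x_{k+1}^2)$, and for $G_y$ the analogous quadratic expansion in $y$ appears with an overall minus sign coming from $-\oint y^2\,dx/2$; dividing by $A$ yields (\ref{greencenter}).

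There is no real analytical obstacle here; the proof is essentially a bookkeeping exercise once the three vector fields are fixed. The only point that deserves care is the orientation convention: Green's theorem in the form used above requires counterclockwise boundary traversal, so the sign of each formula is tied to the stated orientation of the vertex list. I would also note that the closing condition $(x_{N+1},y_{N+1}) = (x_1,y_1)$ is what makes the telescoping terms (for instance $\sum_k (x_{k+1} y_{k+1} - x_k y_k)$ that appears when expanding (\ref{greenarea})) vanish, which reconciles the given symmetric form with the more familiar antisymmetric shoelace expression $\tfrac{1}{2}\sum_k (x_k y_{k+1} - x_{k+1} y_k)$.
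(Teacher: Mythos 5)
Your proposal is correct and follows essentially the same route as the paper's proof: Green's theorem with the vector-field choices $(0,x)$, $(0,x^2/2)$, and $(-y^2/2,0)$, followed by the linear edge parametrization $(1-t)(x_k,y_k)+t(x_{k+1},y_{k+1})$ and summation over edges. The edge-integral evaluations and the handling of orientation and the $1/A$ factor all match the paper's argument.
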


\begin{proof}
According to Green's theorem
\[
\iint_\Omega \left( \frac{dg}{dx}-\frac{df}{dy} \right) \; dA=\oint_{d\Omega}\, f \; dx+g \; dy,
\]
Choose $(f,g) = (0,x)$ so that $\frac{dg}{dx}-\frac{df}{dy}=1$.  We can parametrize 
the line segment between $(x_k,y_k)$ and $(x_{k+1},y_{k+1})$ by 
 $(x,y) = (1-t)(x_k,y_k)+t(x_{k+1},y_{k+1})$ where $t\in(0,1)$. 
For this line segment, we get  $dx=(x_{k+1}-x_{k}) dt$ and
$dy=(y_{k+1}-y_k) dt$. Now we evaluate the integral from Green's theorem:
\begin{eqnarray*}
\int_0^1 x \; dy &=& \int_0^1(x_k+t(x_{k+1}-x_k))(y_{k+1}-y_k) \; dt \\
&=& x_k(y_{k+1}-y_k)+\frac{1}{2}(x_{k+1}-x_k)(y_{k+1}-y_k) \\
&=& \frac{1}{2}(x_{k+1}+x_k)(y_{k+1} -y_k) \; . 
\end{eqnarray*}
Now we use this to calculate the full area: 
\[
A=\iint_\Omega \;
dA=\oint_{d\Omega} x \; dy=\frac{1}{2}\sum_{k=1}^N(x_{k+1}+x_k)(y_{k+1}-y_k) \;. 
\]
Now we turn our
attention to the calculation of $G_x$. We use Green's theorem 
but this time with $(f,g) = (0,x^2/2)$
in order to satisfy $\frac{dg}{dx}-\frac{df}{dy}=x$. Again for the line segment 
from $(x_k,y_k)$ to $(x_{k+1},y_{k+1})$
we get 
\begin{eqnarray*}
\iint_{\Omega}xdA &=& \oint_{d\Omega}\frac{x^2}{2} \; dy \\
&=&\frac{1}{2}\int_0^1(x_k+t(x_{k+1}-x_k))^2(y_{k+1}-y_k) \; dt \\
&=&\frac{1}{6}(x_k^2+x_k x_{k+1}+x_{k+1}^2)(y_{k+1}-y_k) \; .  
\end{eqnarray*}
Since $\rho$ is constant, we get that  $G_x$ is equal to $\rho L/M_{\rm obj}$ times this
integral, but $M_{\rm obj} = \rho L A$, and this gives the factor of $1/A$ written 
in the formula above. 
In a similar way, using Green's theorem with 
$(f,g) = (-y^2/2,0)$, we get 
\[ \iint_{\Omega}ydA = -\frac{1}{6}(y_k^2+y_k y_{k+1}+y_{k+1}^2)(x_{k+1}-x_k) \; . \]
This gives the equation for $G_y$ stated above.  
\end{proof}

The proof above relies on Green's theorem, but 
there are other ways to derive this formula, 
such as dividing the object into triangles or trapezoids, and combining the 
corresponding triangle or trapezoid area and center of gravity formulas.

\begin{definition}[Buoyancy]
 Buoyancy is a force exerted on an object that is wholly or partially submerged
 in a fluid.  The magnitude of this force is equal to the weight of the displaced fluid.
 Buoyancy relates to the density of the fluid, the volume of the displaced
 fluid, and the gravitational field; it is independent of the mass and density
 of the immersed object. The buoyancy force acts vertically upward at the
 centroid of the displaced volume.  The {\em center of buoyancy} is given by
 \begin{equation}
 (B_x,B_y,B_z)  = \frac{1}{V_{\rm{sub}}} \iiint_{\Omega_{\rm{sub}}} (x,y,z) \;  dV,
\end{equation}
where $V_{\rm{sub}}$ is the submerged volume of the object, and $\Omega_{\rm{sub}}$ is the 
submerged domain. Like in the case of center of gravity,  in the uniform
cross section case $B_z = L/2$. We use the notation $\vec{B} = (B_x,B_y)$.  
Note that $\vec{B}$ is the centroid of the submerged domain. 
\end{definition}
We now state one of the major results necessary for understanding floating objects.
\begin{theorem}[Archimedes' Principle]
The upward buoyant force exerted on an object wholly or partially
submerged is equal to the weight of the displaced fluid.  
In the absence of other forces, such as surface tension, this can be expressed in 
the force balance as
\bea \label{eq:buoy}
M_{\rm{obj}} g & = & \rho_{\rm{f}} V_{\rm{sub}} g,
\eea
where $g$ is acceleration due to gravity, $\rho_{\rm{f}}$ is the density of the fluid, 
and $V_{\rm{sub}}$ is the submerged volume of the object.
\end{theorem}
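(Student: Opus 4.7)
The plan is to derive the buoyant force from first principles (hydrostatic pressure) and then obtain the stated equation via a force balance on the object in equilibrium. I set up coordinates so that gravity acts in the $-\hat{k}$ direction and the undisturbed free surface sits at $z=0$. In the fluid, hydrostatic balance $\nabla p = -\rho_{\rm f} g \hat{k}$ with $p=p_{\rm atm}$ at $z=0$ gives $p(z) = p_{\rm atm} - \rho_{\rm f} g z$ for $z \leq 0$.

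Next I would compute the net surface force exerted by the surrounding fluid and atmosphere on the object. Writing $\partial \Omega_{\rm sub}$ for the wetted part of the object's boundary and $\partial \Omega_{\rm dry}$ for the dry part, the total pressure force is
\[
\vec{F} = -\oint_{\partial \Omega_{\rm sub}} p(z)\,\hat{n}\, dS \; - \; \oint_{\partial \Omega_{\rm dry}} p_{\rm atm}\, \hat{n}\, dS.
\]
Using that $\oint_{\partial \Omega} p_{\rm atm}\, \hat{n}\, dS = 0$ over the closed surface of the object, the atmospheric contribution on the dry portion can be replaced by minus the atmospheric contribution on the wet portion. After this cancellation the force reduces to the pure hydrostatic integral $\vec{F} = \oint_{\partial \Omega_{\rm sub}} \rho_{\rm f} g z \,\hat{n}\, dS$ taken over the closed surface of the submerged region $\Omega_{\rm sub}$ (closed by the waterline cross section, on which $z=0$ and the integrand vanishes, so it may be freely included).

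Applying the divergence theorem to the vector field $\rho_{\rm f} g z \,\hat{k}$ then gives
\[
\vec{F} = \iiint_{\Omega_{\rm sub}} \nabla\!\cdot\!(\rho_{\rm f} g z \,\hat{k})\, dV\, \hat{k} = \rho_{\rm f} g V_{\rm sub}\, \hat{k},
\]
which is the buoyant force: upward, with magnitude equal to the weight of displaced fluid. Finally, for an object in static equilibrium (floating at rest), Newton's second law with vanishing acceleration requires the vertical forces to cancel; combining gravity $-M_{\rm obj} g\, \hat{k}$ with the buoyant force above yields exactly \eqref{eq:buoy}.

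The one place that requires care, and which I expect to be the main conceptual obstacle rather than a computational one, is the bookkeeping around the waterline: one must justify closing the wetted surface by the free-surface cross section and argue that horizontal components of the pressure integral cancel, so that the divergence theorem applies cleanly to the bounded submerged region $\Omega_{\rm sub}$. The hypothesis of no surface tension (and of a fully developed hydrostatic pressure field undistorted by the object) is precisely what makes this step rigorous.
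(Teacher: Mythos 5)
The paper does not actually prove this statement: Archimedes' Principle is invoked as a classical physical law, stated without proof, and the surrounding text only interprets equation~\eqref{eq:buoy} as a net-zero force balance before moving on to its consequences (the submerged-area ratio and the waterline condition). So there is no in-paper argument to compare against; what you have supplied is the standard fluid-statics derivation, and it is correct in substance: hydrostatic pressure $p(z)=p_{\rm atm}-\rho_{\rm f} g z$, cancellation of the uniform atmospheric contribution over the closed boundary, closure of the wetted surface by the waterplane section where $p-p_{\rm atm}=0$, and conversion of the surface integral to a volume integral over $\Omega_{\rm sub}$, followed by Newton's second law with zero acceleration. This is a legitimate strengthening of the paper, which treats the principle as an axiom of the model.

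Two small points to tighten. First, your displayed application of the divergence theorem is garbled: as written, $\iiint_{\Omega_{\rm sub}} \nabla\cdot(\rho_{\rm f} g z\,\hat{k})\,dV\,\hat{k}$ has a stray $\hat{k}$ and in any case only produces the vertical component of $\oint \rho_{\rm f} g z\,\hat{n}\,dS$. The clean statement is the gradient (componentwise divergence) theorem, $\oint_{\partial\Omega_{\rm sub}} f\,\hat{n}\,dS=\iiint_{\Omega_{\rm sub}}\nabla f\,dV$ with $f=\rho_{\rm f} g z$, which gives $\rho_{\rm f} g V_{\rm sub}\,\hat{k}$ at once and settles the horizontal-component question you flag (equivalently, apply your argument to $\rho_{\rm f} g z\,\hat{\imath}$ and $\rho_{\rm f} g z\,\hat{\jmath}$, whose divergences vanish). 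Second, make explicit the physical hypothesis that the pressure on the wetted surface equals the undisturbed hydrostatic pressure of the quiescent fluid (flat free surface, no surface tension, statics); that assumption, not the waterline bookkeeping, is where the ``no other forces'' clause of the theorem enters.
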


Observe that~\eqref{eq:buoy} represents a balanced (net zero) equation of
competing forces  with the left terms representing gravitational force and right
term representing the opposing force of buoyancy. Note that if the object has uniform
density, then the mass of the
object can be written as $M_{\rm{obj}} = \rho_{\rm{obj}} V_{\rm{obj}}$.  
In this case, it follows that
\bea
\frac{V_{\rm{sub}}}{V_{\rm{obj}}} & = & \frac{\rho_{\rm{obj}}}{\rho_{\rm{f}}}.
\eea
For our purposes, Archimedes' Principle determines the appropriate waterline
intersections defining a submerged volume whose value relative to the total volume 
matches the appropriate density ratio.  However, it
is important to note that satisfying Archimedes' Principle is not a sufficient
condition for determining a stable equilibrium. An equilibrium orientation of a
floating body occurs when the center of gravity and the center of buoyancy are
vertically aligned. If $\vec{G}$ lies directly below $\vec{B}$ the equilibrium is stable,
whereas if $\vec{G}$ lies
above $\vec{B}$ the equilibrium may or may not be stable. We present an alternative
approach using energy principles similar to that of Erd{\"o}s~\cite{Erdos_etal1992_part1} 
and Gilbert~\cite{Gilbert1991}. After
identifying a waterline that is consistent with Archimedes' Principle, we define a unit
vector normal to the waterline, by keeping the object fixed and rotating the frame of
reference (waterline) by angle $\theta$ to generate all orientations
satisfying Archimedes' Principle. The stable positions of a floating body
occur at the minima of the potential energy. The  
potential energy function for a floating body  is given by 
\bea \label{eq:PEfunction}
U(\theta) =  \hat{n}(\theta)\cdot (\vec{G}-\vec{B}(\theta)) ,
\eea
where $\hat{n}$ is the unit normal vector to the waterline pointing out of
the water, and $\theta$ is the rotation angle of the waterline. Note that $\vec{B}(\theta)$ and
$\hat{n}(\theta)$ depend on $\theta$, but $\vec{G}$ is independent of $\theta$.  
Below, we derive formulas for the stable floating configurations by 
finding minima for $U(\theta)$. 




\subsection{Square Cross Section Revisited}\label{subsec:squarerevisit}



The stability of a long floating object with square cross
section, and the corresponding nontrivial floating configurations, have been investigated theoretically in a number of studies.
Reid \cite{Reid1963} provided the first theoretical identification
of stable floating equilibrium configurations based on arguments using forces and
moments.  Feigel \& Fuzailov \cite{FF2021} provided a recent alternative derivation
of these equilibrium conditions, a brief review of related studies, and also 
detailed experiments validating the theory.  Their experiments had a particular focus on floating configurations
in the transition from `flat side up' orientations to `corner up' orientations.
We revisit the square cross section configuration here with the goal of
writing down the entire potential energy landscape, whose minima reveal the stable equilibrium
configurations.    The square has four-fold symmetry, and we
exploit this in the identification of a center of buoyancy
formula.  A new contribution we make in the present work corresponds to situations in which the center of
gravity is not at the center of the square.  We specifically explore the breaking
of this four-fold symmetry for floating objects
with square cross sections and use the corresponding potential energy landscapes to 
understand the observations.   In sections that follow, we demonstrate that the identification
of potential energy landscapes can be obtained for shapes of more general cross sections
in order to understand their stable floating configurations.

Rather than fix a waterline and consider different orientations of the
square we fix a reference frame on the square with corners at $(1,-1)$,
$(1,1)$, $(-1,1)$, $(-1,-1)$ and consider different orientations of the
waterline. Three configurations are relevant as shown in
Figure~\ref{fig-square_diagram} -- the first has the waterline
intersecting opposite sides of the square and the second and third have the
waterline intersecting adjacent sides of the square.  We work out these
three cases below and then give the generalization for all orientations.

For our $[-1,1]^2$ square, the cross sectional area is $\Aobj = 4$.  If
we denote by $\Asub$ the submerged area, Archimedes' Principle requires
that
\bea
\label{archimedes}
\frac{\Asub}{\Aobj} = R,
\eea
where $R \in (0,1)$ is the density ratio $\rho_{\rm{obj}}/\rho_{\rm{f}}$
of the floating object to the fluid.  We shall assume that the object's
density is uniform throughout but if it were not the appropriate
interpretation
of $\rho_{\rm{obj}}$ for the application of Archimedes' Principle would
be the effective density -- i.e. the object's mass divided by the volume
of the object.  Note that in the present context we work in terms of
cross sectional area; corresponding volumes would be obtained by
multiplying the cross sectional area by the length of the object in the
third dimension.

Below we outline the computation of the center of buoyancy, $\vec{B}(\theta)$, as a function of orientation $\theta$ for the two cases in which (1) the waterline intersects opposite sides of the square and (2) the waterline intersects adjacent sides of the square.

\begin{figure}[h!]
\begin{center}
\includegraphics[height=0.35\textwidth]{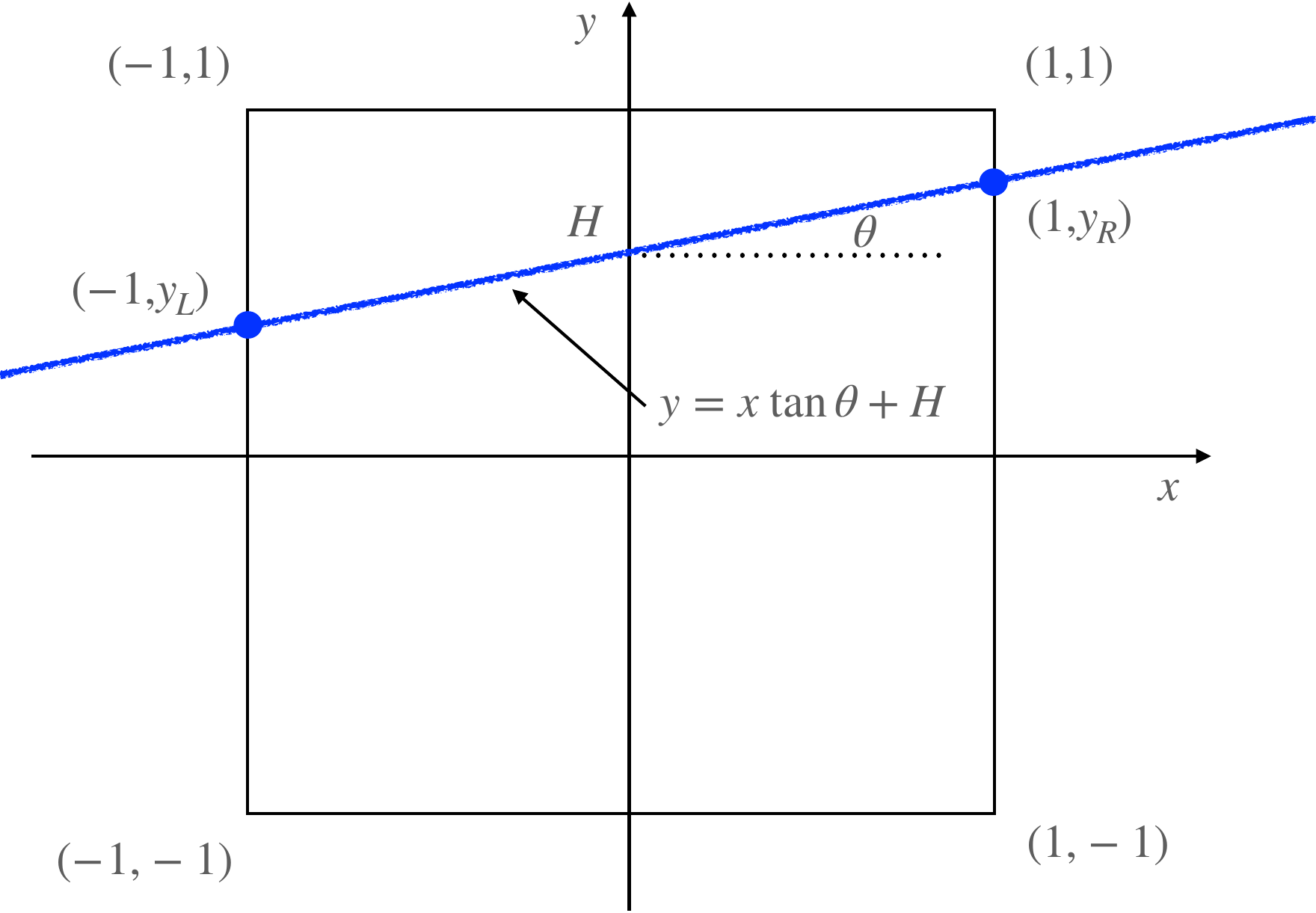}
\includegraphics[height=0.35\textwidth]{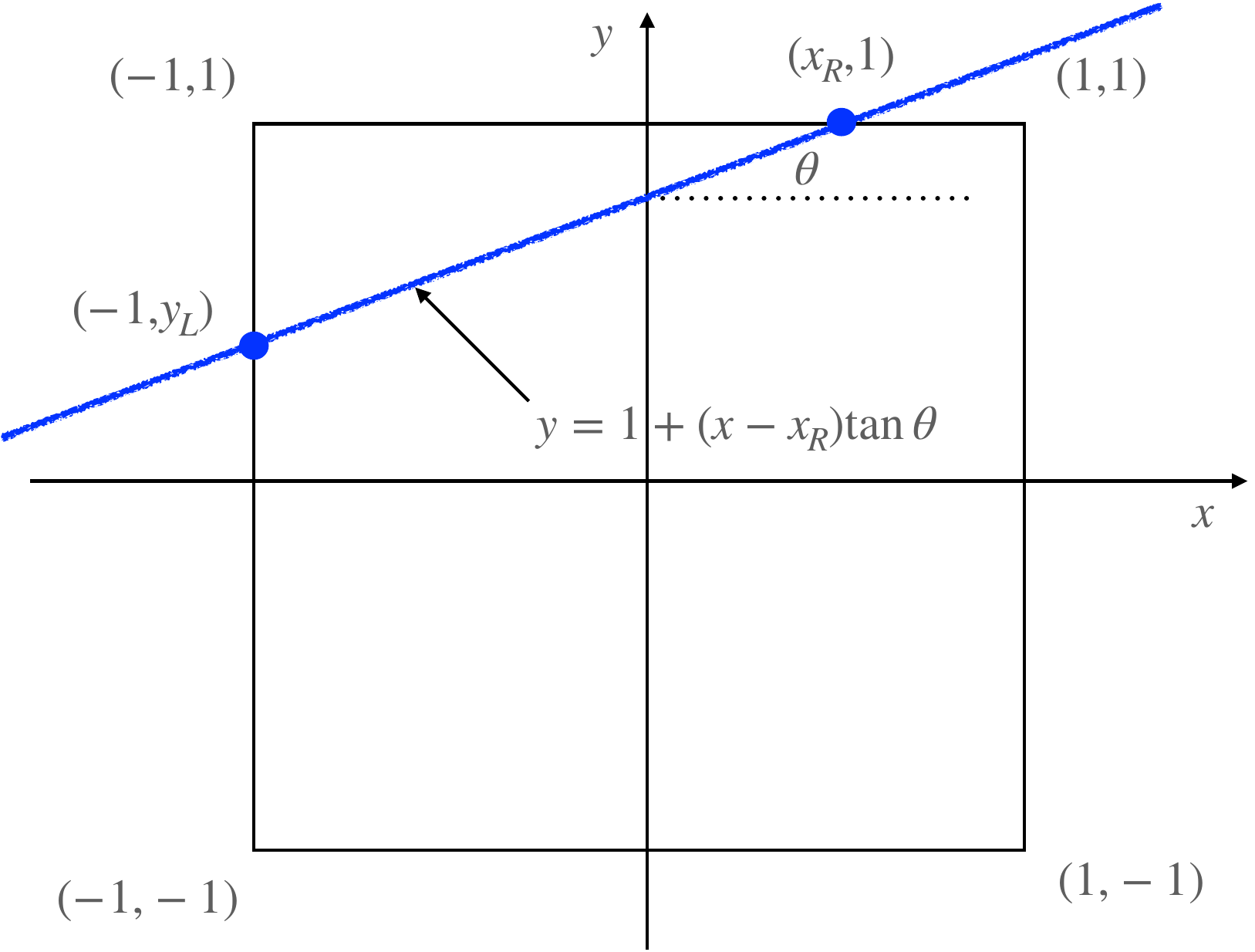}\\ \vspace{0.05in}
\includegraphics[height=0.37\textwidth]{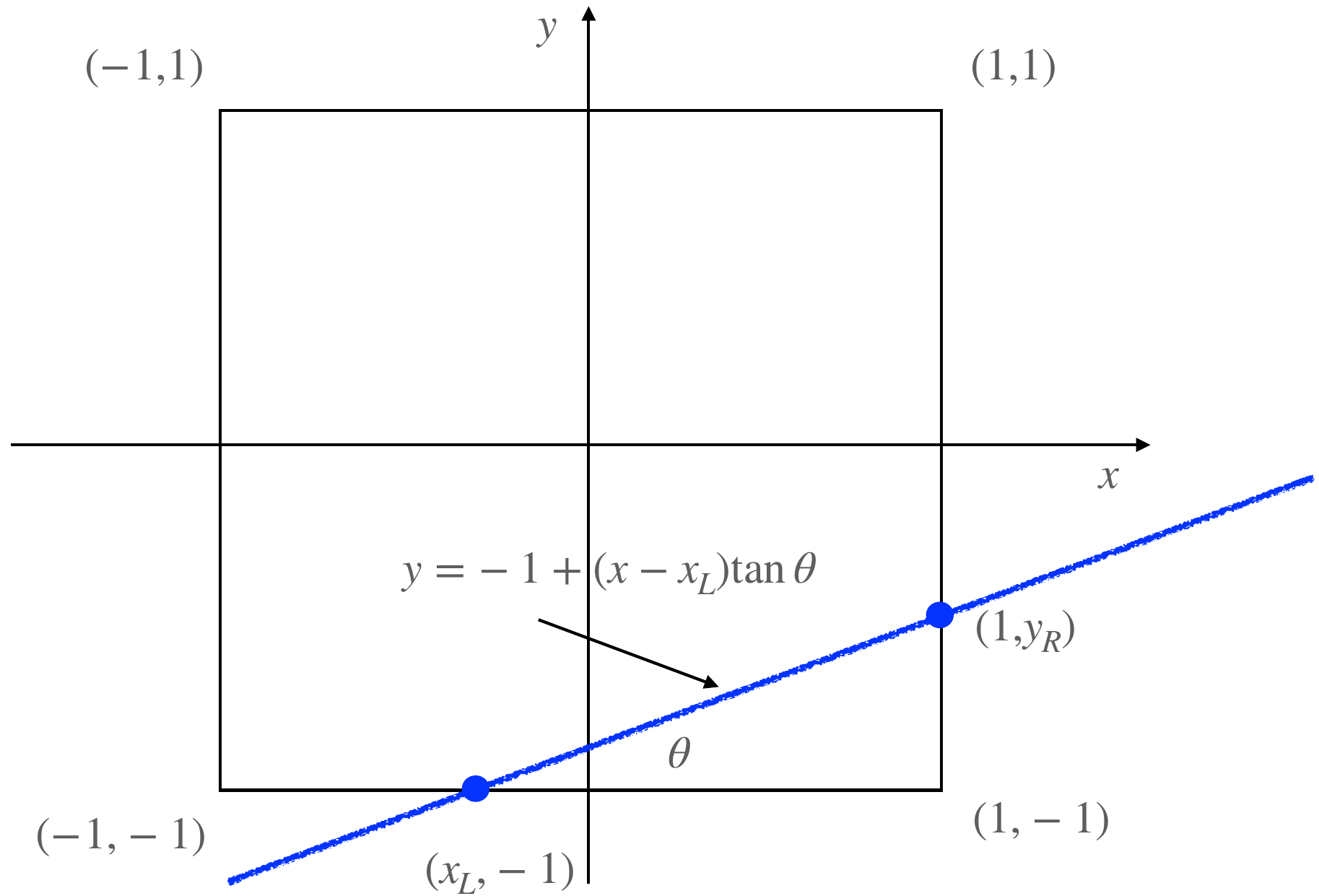}
\end{center}
\caption{The sketch on the upper left shows the configuration in which the waterline
(blue) intersects opposite sides of the square. The sketch on the upper right
shows the configuration in which the waterline intersects adjacent sides
of the square for $R> 1/2$.  The sketch on the bottom
shows the configuration in which the waterline intersects adjacent sides
of the square for $R< 1/2$.}
\label{fig-square_diagram}
\end{figure}


\subsubsection{Waterline Intersects Opposite Sides of Square}

Here we define the waterline by the equation
\bea
\label{eq:waterline_case1}
y & = & x \tan \theta + H,
\eea
where $\theta$ is the slope of the waterline and $H$ is the y-intercept
(see Figure~\ref{fig-square_diagram}).   With the water assumed to occupy
the region below the waterline, the submerged area can be written in
terms of $H$ as
$\Asub = 2 (1 + H)$.
Therefore, Archimedes' Principle requires $R = (1+H)/2$,  or equivalently $H = 2R - 1$. 
Note that for $R \in(0,1)$ it follows that $H \in (-1,1)$.

We define waterline intersection points $(-1,y_L)$ and $(1,y_R)$ and note that 
\bea
y_L = - \tan \theta + H, \quad
y_R = \tan \theta +H.
\eea
By definition, the configuration under consideration requires that 
$y_L \in [-1,1]$ and $y_R \in [-1,1]$. Furthermore, the largest and smallest 
$\theta$ occur  for $y_R = \pm 1, y_L = \mp 1$, meaning that 
$-\pi/4 \le \theta \le \pi/4$ and $-1 \le \tan \theta \le 1.$ 
Combining these facts with the definitions of $y_L$ and $y_R$, we get 
\bea
\tan \theta + H \le 1 \hspace{0.25in} \mbox{and} \hspace{0.25in} -\tan \theta + H \le 1 \hspace{0.25in} \mbox{if $H \ge 0$}, \\
-\tan \theta + H  \ge -1 \hspace{0.25in} \mbox{and} \hspace{0.25in} \tan \theta + H \ge -1 \hspace{0.25in} \mbox{if $H \le 0$},
\eea
These can be rewritten as 
\bea
\label{eq:tan_theta_A}
 -1 + | H | \le \tan \theta \le 1 - | H |.
\eea
This range of $\tan \theta$ corresponds to  a range of $\theta$ values 
$[\theta_1^{\min},\theta_1^{\max}]$ contained in  $[-\pi/4,\pi/4]$. By symmetry, 
there is a corresponding configuration when rotated by $\pm \pi/2$ and $\pm \pi$.

The submerged area in this configuration is defined by the four points $(1,-1)$, $(1,y_R)$, $(-1,y_L)$, and $(-1,-1)$.  
We use \eqref{greencenter} to find the center of buoyancy as the centroid of the 
submerged boundary region.  
In particular, let $(x_k,y_k)$ be given by $\{ (1,-1), (1,y_R), (-1,y_L),(-1,-1),(1,-1) \}$. 
Then $\vec{B}(\theta) = (B_x(\theta),B_y(\theta))$ where
\begin{eqnarray*} 
 B_x(\theta) & = & \frac{1}{6 \Asub} \sum_{k=1}^{4} (x_k^2 + x_k x_{k+1} + x_{k+1}^2)(y_{k+1}-y_k)\\
B_y (\theta) & = & \frac{1}{6 \Asub} \sum_{k=1}^{4} -(y_k^2 + y_k y_{k+1} + y_{k+1}^2)(x_{k+1}-x_k) \;. 
\end{eqnarray*}
Computing these sums, combined with the values of $y_L$ and $y_R$ and the fact that $\Asub = 2(1+H)$, 
we find that the center of buoyancy takes the form
\bea
\vec{B}(\theta) = \vec{B}_1(\theta) & \equiv & \frac{1}{2 (1+H)} \Big( \frac{2}{3} \tan \theta , -1 + H^2 + \frac{1}{3} \tan^2 \theta \Big),
\eea
where $\theta$ can take on any value defined by the inequalities~(\ref{eq:tan_theta_A}).  For use below we define this specific form for the center of buoyancy as $\vec{B}_1(\theta)$.

\subsubsection{Waterline Intersects Adjacent Sides of Square}

A similar approach can be applied to the second configuration shown in the upper right sketch of Figure~\ref{fig-square_diagram}.
Here we give expressions for the results when $R \ge 1/2$ and when $R < 1/2$.

{\bf Case 1: $R \ge 1/2$.}
Here we assume that the waterline intersects the left and top sides of the square at points $(-1,y_L)$ and $(x_R,1)$ so that three corners of the square are submerged.  
We consider the case $R< 1/2$ in the next section although note that this case can be carefully extracted from the
present case (according to Gilbert \cite{Gilbert1991}, Feigel \& Fuzailov \cite{FF2021}, among others).

Here we identify the waterline by
\bea
y & = & (x - x_R) \tan \theta  + 1,
\eea
where $y_L = - (1 + x_R) \tan \theta  + 1$. 

This waterline cuts a triangular region of area $\frac{1}{2} (1+ x_R) (1- y_L)$ from the original square.  This means that
$\Asub = 4 - \frac{1}{2} (1+x_R) (1- y_L)$ and 
\bea
R = \frac{\Asub}{\Aobj} & = & \frac{ 4 - \frac{1}{2} (1+x_R) (1- y_L)}{4} =  \frac{ 4 - \frac{1}{2} (1+x_R)^2 \tan\theta}{4}.
\eea
Rearranging this gives $x_R$ in terms of $R$ and the waterline slope $\tan \theta$ 
\bea
(1+ x_R)^2 & = & \frac{8 - 8 R}{\tan \theta}.
\eea

Conditions on $\theta$ come from the requirement that $0 \le (1+x_R) \le 2$ and $-1 \le y_L \le 1$.    The first of these reveals that
\bea
0 \le  \frac{2 - 2 R}{\tan \theta} \le 1.
\eea
For the case under consideration $0 < \theta < \frac{\pi}{2}$.  It follows that $\tan \theta \ge 2 - 2R$.
Equality corresponds to the waterline passing through the point $(1,1)$ and $(-1,y_L)$.
The other extreme corresponds to the waterline passing through the point $(-1,-1)$ and $(x_R,1)$.  This has
$\tan \theta = 2/(1+x_R)$.
Here
the triangular area is $\frac{1}{2} 2 (1+x_R)$ which means $R = (4 - (1+x_R))/4$ or $(1+x_R) = 4 - 4R$.  Since $\theta$
cannot exceed this angle we have $\tan \theta < 2/(4-4R)$.  Thus, for this configuration the value of $\tan \theta$ is 
constrained by
\bea \label{theta2minmax}
2 - 2R \le \tan \theta \le \frac{1}{2-2R}.
\eea

As in the previous case, 
the center of buoyancy is obtained by calculating the centroid
of the submerged area  using \eqref{greencenter}. In particular, using 
the counterclockwise oriented  
vertices of the submerged polygon: 
\[
\{(x_R,1),(-1,y_L),(-1,-1),(1,-1),(1,1)\} , 
\]
we can calculate the following integral 
\bea
\vec{B}(\theta) & = & \frac{1}{\Asub } \iint_{\Omega_{\rm sub}} (x,y) \; dA
\eea
as a sum. 
Evaluating this integral gives
\bea
\vec{B}(\theta) = \vec{B}_2^{+}(\theta) & \equiv & \frac{1}{\Asub} \Big( 1 - \frac{1}{2} (y_L + 1)  - \frac{1}{6} (x_R^3 + 1) \tan\theta ,   \nonumber \\
                 & & \mbox{}   -1 + \frac{1}{2} (1-x_R)  + \frac{(1-y_L^3)}{6\tan\theta} \Big).
\eea
For use below, we define this specific form for the center of buoyancy as $\vec{B}_2^{+}(\theta)$. 
Recall that $\Asub = 4R$ and
\bea
y_L = - \tan \theta (1 + x_R) + 1,\quad
(1+ x_R)^2 = \frac{8 - 8 R}{\tan \theta},
\eea
where $\tan \theta$ satisfies \eqref{theta2minmax}.
This condition gives a range of 
$\theta$ values given by $[\theta_2^{+ \min},\theta_2^{+ \max}]$ contained in $[0,\pi/2]$. 
Again by symmetry, we get a corresponding set of angles by adding $\pm \pi/2, \pm \pi$. 

{\bf Case 2:  $R < 1/2$.} Here assume that the waterline intersects the square at points $(x_L,-1)$ and $(1,y_R)$ so that only the lower right corner of the square is submerged.  

Here we identify the waterline by
\bea
y & = & \tan \theta (x - x_L) - 1,
\eea
where $y_R =  \tan \theta (1 - x_L) - 1$.

This waterline cuts a triangular region of area $A_{sub} = \frac{1}{2} (1+ y_R) (1- x_L)$ from the original square.  This means that
\bea
R = \frac{\Asub}{\Aobj} & = &\frac{1}{8} (1+ y_R) (1- x_L) = \frac{(1-x_L)^2 \tan\theta}{8}.
\eea
Rearranging this gives $x_L$ in terms of $R$ and the waterline slope $\tan \theta$ 
\bea
(1- x_L)^2 & = & \frac{8 R}{\tan \theta}.
\eea

Conditions on $\theta$ come from the requirement that $0 \le (1- x_L) \le 2$.    This translates to 
\bea
\tan \theta \ge 2 R.
\eea
Also the condition  $-1 \le y_R \le 1$ leads to 
\bea
0 \le \tan \theta \le \frac{2}{1 - x_L}.
\eea
Using $8R = (1-x_L) (1+ y_R)$ leads to $0 < \tan \theta < 1/(2R)$.

So, together these require
\bea \label{theta2minus}
2R \le \tan \theta \le \frac{1}{2R}.
\eea

As before, the center of buoyancy satisfies
\bea
\vec{B}(\theta) & = &  \frac{1}{\Asub }\int_{\Omega_{\rm sub}} (x, y) \; dA , 
  %
  %
    %
    %
    %
 \eea
 which can be calculated as a sum involving the vertices of the submerged polygonal 
 cross section via ~\eqref{greencenter}. 
It follows that
\bea
\vec{B}(\theta) &=& \vec{B}_2^{-}(\theta)  \\
& \equiv & \frac{1}{\Asub} \Big( \frac{y_R+1}{2}  - \frac{(1-x_L^3)}{6} \tan \theta ,
  - \frac{(1-x_L)}{2}  + \frac{1+y_R^3}{6\tan \theta} \Big) \; .\nonumber
\eea
For use below, we define this specific form for the center of buoyancy as $\vec{B}_2^{-}(\theta)$. 
Recall that $\Asub = 4R$ and
\bea
y_R =  \tan \theta (1 - x_L) - 1,\quad
(1- x_L)^2 & = & \frac{8 R}{\tan \theta},
\eea
where $\tan \theta$ satisfies \eqref{theta2minus}.

\subsubsection{Potential Energy Expressions: Square Cross Section}

As defined in ~\eqref{eq:PEfunction} the potential energy function is given by
\[
U(\theta)  =  \hat{n}(\theta) \cdot (\vec{G} - \vec{B}(\theta)) ,
\]
where the unit normal to the waterline can be expressed as a function of
$\theta$ as $\hat{n}(\theta) = ( - \sin \theta, \cos \theta )$. For the square
defined above with uniform density the center of gravity
$\vec{G}=(0,0)$.    However, we are interested in a generalization of
the square where the center of gravity, by some means, is not
necessarily located at the center but rather has coordinates $\vec{G} =
(G_x, G_y)$.  Note that as long as Archimedes' Principle is applied with
the appropriate mass of the object, the calculations presented above for
the center of buoyancy are independent of the location of the center of
gravity.  So, in what follows we treat $\vec{G}$ as nonzero in general. 

For $R \ge 1/2$ define
\bea
U_{B_1}(\theta) & = &  \hat{n}(\theta) \cdot \vec{B}_1(\theta) \, ,  \mbox{ for } -1 + | H | \le \tan \theta \le 1 - | H | \, ,
\eea
which corresponds to a range of $\theta \in [\theta_1^{\min}, \theta_1^{\max}]$  
defined by~\eqref{eq:tan_theta_A}. 
Also define
\bea
U_{B_2^+}(\theta) & = &  \hat{n}(\theta) \cdot \vec{B}_2^+(\theta) \, , \mbox{ for } 2 - 2R \le \tan \theta \le \frac{1}{2-2R} \, ,
\eea
which corresponds to a range of $\theta \in [\theta_2^{+ \min}, \theta_2^{+ \max}]$ 
defined by~\eqref{theta2minmax}. 

We can write the potential energy function $U(\theta)$ as follows
\bea
\label{eq:PE_formula_SQUARE}
U(\theta) & = & \left\{
\begin{array}{ll}
\hat{n}(\theta) \cdot \vec{G} - U_{B_1}(\theta) & \theta \in [\theta_1^{\min}, \theta_1^{\max}] \\
\hat{n}(\theta) \cdot \vec{G} - U_{B_1}(\theta \pm \frac{\pi}{2}) &  \theta \pm \frac{\pi}{2} \in [\theta_1^{\min}, \theta_1^{\max}] \\
\hat{n}(\theta) \cdot \vec{G} - U_{B_1}(\theta \pm \pi) &  \theta \pm \pi \in [\theta_1^{\min}, \theta_1^{\max}] \\
& \\
\hat{n}(\theta) \cdot \vec{G} - U_{B_2^+}(\theta) & \theta \in [\theta_2^{+ \min}, \theta_2^{+ \max}] \\
\hat{n}(\theta) \cdot \vec{G} - U_{B_2^+}(\theta \pm \frac{\pi}{2}) &  \theta \pm \frac{\pi}{2} \in [\theta_2^{+ \min}, \theta_2^{+ \max}] \\
\hat{n}(\theta) \cdot \vec{G} - U_{B_2^+}(\theta \pm \pi) &  \theta \pm \pi \in [\theta_2^{+ \min}, \theta_2^{+ \max}] \\
\end{array}
\right.
\eea
A similar formula applies when $R < 1/2$ (replace $B_2^+$ with $B_2^-$ and the 
corresponding range for $\tan \theta$ given in~\eqref{theta2minus}).

\subsection{Squares With Off-Center Weights}

We also explore the case of a square cross section with an off-center weight parallel to the long axis of the object.  Specifically we consider
3D printed objects with a hole in the square that can be filled with a material of different density.  In our experiments we had the option to leave
the hole as void space or to insert a nail cut to fit the object.   In either case, before floating the object tape was placed over the holes to prevent 
water from filling the space.

For such a configuration we can predict the modified center of gravity $\vec{G} \neq 0$.  In particular, consider the same square with corners
at $(1,-1)$, $(1,1)$, $(-1,1)$, $(-1,-1)$ with a hole with circular cross section at point $(a,b)$ with $a \in (0,1)$, $b \in (0,1)$, and radius $r_H$.  When such an object
is printed there is a border around the hole whose thickness we denote by $t$ and whose density is $\rho_{\rm PLA}$ (i.e.~the density of the solid print material).  
With the hole filled with a nail whose density
is $\rho_{\rm nail}$ we can compute the center of gravity of the object as a whole (printed object plus nail) as
\bea
\label{eq:COG_hole1}
M_{\rm{obj}} \vec{G} & = & L \left\{ \int_{\Omega_0} \rho(\vec{x}) \vec{x} \; dA + \int_{\Omega^{{\rm hole}+{\rm border}}} \rho(\vec{x}) \vec{x} \; dA \right\},
\eea
where $M_{\rm{obj}}$ is the mass of the whole object (including the nail if one is inserted), $L$ is the length of the object, $\Omega_0$ denotes the cross-sectional domain of the square
excluding the hole and border and $\Omega^{{\rm hole}+{\rm border}}$ denotes the circular cross section that includes the (printed) border of the hole and 
the hole, and $\rho(\vec{x})$ denotes the material density at position $\vec{x}$ in the plane.  
The square printed without a hole will have some void space in its interior and this can be controlled by changing the infill of the print.  In our
squares printed with a hole this infill region gets replaced by the hole plus the border material of the hole.  Therefore, it is convenient to rewrite
the formula~(\ref{eq:COG_hole1}) for center of gravity $\vec{G}$ as 
\bea
\label{eq:COG_hole2}
\frac{M_{\rm{obj}}}{L} \vec{G} & = & \int_{\Omega_0} \rho(\vec{x}) \; \vec{x} \;  dA + \int_{\Omega^{{\rm hole}+{\rm border}}} \rho(\vec{x}) \; \vec{x} \; dA \nonumber \\
& & \mbox{} + \int_{\Omega^{{\rm hole}+{\rm border}}} \rho_{\rm infill} \;  \vec{x} \; dA - \int_{\Omega^{{\rm hole}+{\rm border}}} \rho_{\rm infill} \; \vec{x} \; dA, \nonumber \\
& = & \int_{\Omega} \rho(\vec{x}) \; \vec{x} \; dA +  \int_{\Omega^{{\rm hole}+{\rm border}}} (\rho(\vec{x}) - \rho_{\rm infill}) \;  \vec{x} \; dA,
\eea
where $\Omega$ denotes the cross section of the square undisturbed by a hole.  Under our assumption of a square with uniform density the 
first integral in this expression equates to the zero vector.  That is, for a square without the off-center hole the center of gravity is located at $(0,0)$.
This requires that the infill is sufficiently symmetric about the center of the square so that it negligibly moves the center of
gravity away from $(0,0)$.\footnote{This appears to be a good approximation for the {\em grid} infill pattern but not, for example, the {\em cat} infill pattern or for {\em grid} at very low infills.} 
It follows that for the off-center square the center of gravity can then be estimated as
\bea
\label{eq:COG_hole3}
\frac{M_{\rm{obj}}}{L} \vec{G} & = & \int_{\Omega^{{\rm hole}+{\rm border}}} (\rho(\vec{x}) - \rho_{\rm infill}) \; \vec{x} \; dA, \nonumber \\
 & = &(\rho_{\rm nail} - \rho_{\rm infill})  \int_{0}^{2\pi} \int_0^{r_H} \vec{x} r\; dr \; d\theta \nonumber \\
  & & \mbox{} + (\rho_{\rm PLA} - \rho_{\rm infill}) \int_{0}^{2\pi} \int_{r_H}^{r_H+t} \vec{x} r \; dr \; d\theta,
\eea
where each of the density terms in these expressions are assumed to be independent of position.  These integrals can be evaluated writing
$\vec{x} = (a + r \cos \theta, b + r \sin \theta)$.  It follows that
 \bea
\label{eq:COG_hole4}
\frac{M_{\rm{obj}}}{L} \vec{G}  & = & \pi (\rho_{\rm nail} - \rho_{\rm infill}) r_H^2 (a,b) 
+ \pi (\rho_{\rm PLA} - \rho_{\rm infill}) [ (r_H+t)^2 - r_H^2] (a,b),\nonumber \\
& = & \left\{ \pi(\rho_{\rm nail} - \rho_{\rm infill}) r_H^2
+ \pi (\rho_{\rm PLA} - \rho_{\rm infill}) [ 2 r_H t + t^2 ] \right\} (a,b).
\eea
So, for the off-center square with hole at $(a,b)$ the center of gravity is shifted towards $(a,b)$ from the origin by terms proportional to density differences and
cross-sectional areas.  Note that all of the quantities in this expression can be determined by straightforward measurements and are listed in Table \ref{Table-OffCenter}.

Practically speaking, our prints are not completely uniform in the direction orthogonal to the square face since the top and bottom square faces are solid PLA.
An improved estimate for the center of gravity that accounts for the two end faces of the square of thickness $t$ with density $\rho_{\rm PLA}$ is
 \bea
\label{eq:COG_hole4b}
M_{\rm{obj}} \vec{G} & = & \Big\{ \left[ (\rho_{\rm nail} - \rho_{\rm infill}) (\pi r_H^2)
+ (\rho_{\rm PLA} - \rho_{\rm infill}) \pi (2 r_H t + t^2) \right] (L-2t)  \nonumber \\
& & \mbox{} + 2 t (\rho_{\rm nail} - \rho_{\rm PLA} ) (\pi r_H^2)  \Big\} (a,b), \nonumber \\
& = & M_{\rm nail} (a,b) + \Big\{ \left[  - \rho_{\rm infill} (\pi r_H^2)
+ (\rho_{\rm PLA} - \rho_{\rm infill}) \pi (2 r_H t + t^2) \right] (L-2t)  \nonumber \\
& & \mbox{} - 2 t \rho_{\rm PLA}  (\pi r_H^2)  \Big\} (a,b).
\eea
That is, the new center of gravity, $\vec{G}$, is shifted towards the
hole location $(a,b)$ by an amount related to the mass of the nail (we
use $M_{\rm nail}=0$ for an open hole) and terms related to the 
thickness of the hole and the material it replaces (either infill or
 boundary).

 \begin{table}[t]
\caption{Various parameter values for 3D prints with square cross section and a hole.
The values of $M_{\rm nail}$ and $\rho_{\rm nail}$ were obtained by noting that each nail was 60 mm in length and 2 mm in radius and that 25 nails
weighed $137.51$ g.}\label{Table-OffCenter}
%
%
\renewcommand\arraystretch{1.5}
\noindent\[
\begin{array}{|c|c|c|}
\hline
\mbox{Parameter} & \mbox{Description} & \mbox{Value}   \\ \hline
I 
& \makecell[c]{\mbox{Infill Fraction} \\ (\mbox{Infill \% /100})}   
& 0.05 ... 0.95 \\ \hline
s 
& \mbox{Length of Side of Square}
& 30 \; \mbox{mm}\\ \hline
L 
& \mbox{Length of Object}
& 60 \; \mbox{mm}\\ \hline
\rho_{\rm PLA} 
& \mbox{Density of PLA} 
& 1.15 \mbox{ g cm}^{-3}\\ \hline
M_{\rm nail} 
& \mbox{Mass of Nail} 
& 5.5004 \mbox{ g}\\ \hline
\rho_{\rm nail} 
&\makecell[c]{\mbox{Density of Nail}  \\ M_{\rm nail}/{V_{\rm nail}} }
& 7.295 \mbox{ g cm}^{-3} \\ \hline
\rho_{\rm infill} 
& \makecell[c]{\mbox{Effective Density of Infill} \\ \rho_{\rm PLA} \times I} 
& \mbox{varies with Infill} \\ \hline
r_H 
& \mbox{Radius of Hole} 
& 2.5 \; \mbox{mm} \\ \hline
t 
& \mbox{Thickness of Solid Border} 
& 0.8 \; \mbox{mm} \\ \hline
\end{array}
\]
\end{table}

The predicted effective density for the block of square cross section is
\bea
\rho^{\rm eff}_{\rm square} & = & \frac{M_{\rm square} }{V_{\rm ext}},
\eea
where the total mass of the object is  
\bea
M_{\rm square} & = & (V_{\rm ext} - V_{\rm int})\rho_{\rm PLA} + V_{\rm int} I \rho_{\rm PLA},
\eea
and $V_{\rm ext} = s^2 L$ and $V_{\rm int} = (s - 2t)^2  (L - 2t)$.
When the block is printed with a hole of radius $r_H$ parallel to the long axis the predicted effective density is
\bea \label{eq:effective_density_hole}
\rho^{\rm eff}_{\rm square + hole} & = & \frac{1}{V_{\rm ext}} \Big\{ 
M_{\rm square} + M_{\rm nail} - \pi r_H^2 [ 2 t + (L - 2t) I ] \rho_{\rm PLA} \nonumber \\
& & \mbox{}                 +  \pi [ (r_H +t)^2 - r_H^2] (L - 2t) (\rho_{\rm PLA} - I \rho_{\rm PLA})
\Big\}
\eea
The effective density of the object without the nail filling the hole is calculated by the same formula with $M_{\rm nail}$ set to zero.

Various comparisons between this theory and our experimental observations and measurements are given below.  First, however,
we need to create our floating objects.


\section{Methods for 3D Printing}
\label{sec:3DPrinting}
In order to experiment with floating objects, 
we have opted to experiment by designing and 3D printing them. This has 
the advantage that we can easily create any object we can describe mathematically. 
In addition, we can vary the density of our print by changing the infill density, a parameter 
which is set at the time of printing. In this section, we describe the full workflow 
needed -- and we have kept our presentation accessible to those with no 3D printing experience, 
in the hopes of making it possible for anyone with an interest to create their own experiments. 
The process consists of three steps: first, we need to design the objects in 
design software. We then need to  give the  print specific parameters, 
within a {\it slicer} software, where the choice of slicer software depends on the printer. Finally, we
print the objects on a 3D printer. We detail the workflow of these steps here. 

\begin{figure}[tb]
\includegraphics[width=0.5\textwidth]{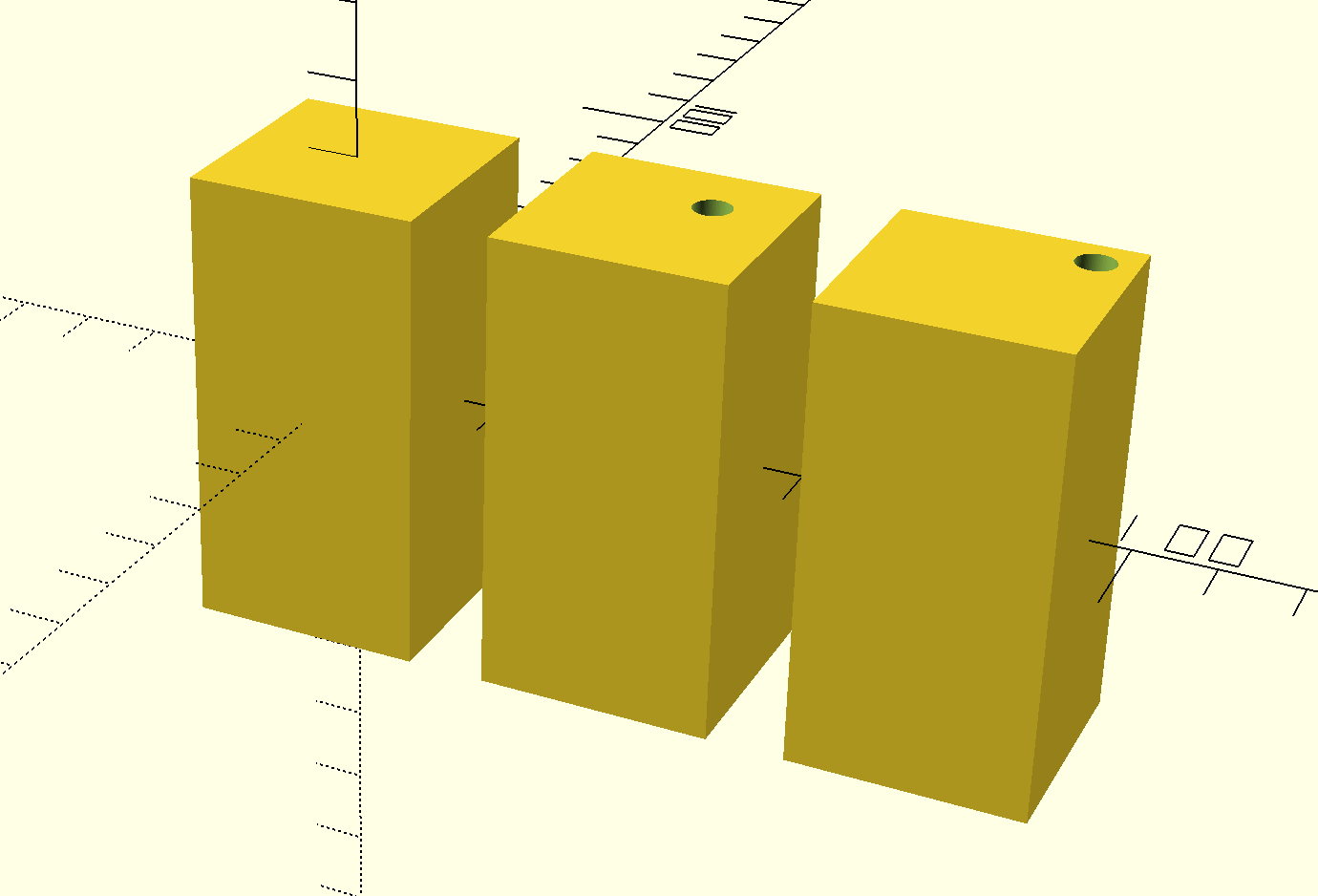}
\caption{This shows three cubes in OpenSCAD. The leftmost cube is at the ``true" center. The 
other two cubes have a hole on the diagonal line of the cross section. All cubes are 
shown with the longest direction vertical. This is the orientation in which we printed all cubes.}
\label{firstfig}
\end{figure}
\subsection{Design} 
We have opted to design the objects in 
 OpenSCAD~\cite{OpenSCAD}, which is a free command
line computer aided design (CAD) system.  Figure~\ref{firstfig} shows 
three of our experimental floating objects generated in OpenSCAD.
Since we are interested in creating objects with a fixed cross section, 
we are able to do so in just a few lines of code. 
For convenience, we include syntax here so a reader could create their own. Copies of both 
sample code and stl files are available from~\cite{GITHUB}.
To create a box with height 60 mm 
and a 30 mm by 30 mm  square cross section, we use the command
\begin{verbatim}
   cube([30,30,60]);
\end{verbatim}
For our off-center weight experiments, we have placed a hole of radius 2.5 mm lengthwise 
in the interior of the box. This is done by taking the set difference between 
the box  above and a cylinder. In order to create a cylinder of height 60 mm and 
radius 2.5 mm, we use the command 
\begin{verbatim}
   cylinder (h = 60, r=2.5, center = true, $fn=100);
\end{verbatim}
The command {\tt center = true} centers the object so it is easier to position it with respect to the cube. 
The command {\$fn=100} indicates that the circle should be estimated by a 100-sided polygon. 
In order to have the cylindrical hole to be displaced from the center of the cube, we use the 
{\tt translate} command. 
Putting this all together  to create a vertical hole that is displaced by 10 mm diagonally
from the center of the cube, we use the command sequence:
\begin{verbatim}
   difference() {
      cube([30,30,60], center =true);
      translate([10,10,0]) 
      	cylinder (h = 80, r=2.5, center = true, $fn=100);
   }
\end{verbatim}

While most of our floating objects were simple cubes with and without holes, 
in Section~\ref{sec:polygoncode} we also considered more 
complicated objects, where the cross section was given as a polygon in the form 
$\{(x_1,y_1),(x_2,y_2),\dots,(x_n,y_n)\}$. 
This includes for example the Mason M in Figs.~\ref{fig-intro_fig_MasonM} \&~\ref{masonfloat}.
 In order to create objects with general polygonal cross sections in OpenSCAD, 
 we first create a polygon with the {\tt polygon} command, and then we 
can turn this into a solid with fixed cross section using the 
{\tt linear\_extrude} command. For example, 
the following creates a cylinder of height 60mm with a cross section which is a  
polygon determined by the ordered points $(0,0),(50,0),(60,40),(50,20)$.
\begin{verbatim}
   points=[[0,0],[50,0],[60,40],[50,20]]; 
   linear_extrude(height=60) polygon(points);
\end{verbatim} 

\subsection{Slicing and Printing} We now describe the printing parameters set 
within the slicing software.
In order to speed up printing and keep objects light, a 3D printed
object is usually partially hollow inside, printed with a lattice
pattern filling a fixed {\em infill fraction} of the object interior.
We denote this quantity by $I$.  The fact that we can vary the infill fraction  is quite
useful for us, since our goal is to see how the stable floating
orientations change with density.  To be consistent, we did not vary
 the {\em infill pattern}; we printed all of our prints with the {\em grid} infill pattern. 
  For further consistency, all of our floating objects were printed on a Makerbot Replicator 
 5th Generation printer and sliced in Makerbot Print proprietary software. 

The mass of a print is most significantly affected by the infill fraction, but this 
is not the only factor. In addition, an outer border layer of the print is printed at 100\% infill
for a fixed thickness on the sides, and a fixed thickness on the top and bottom. 
For the most part, the default values for all thickness is 
0.8 mm, the thickness of two  {\em shells}, 
 where a shell is $0.4$ mm, i.e. the diameter 
 of a standard extruder nozzle. 

In order to calculate the predicted mass of a print, we need to know $\rho_{\rm PLA}$,
the density of PLA. Most sources state that this value is around 1.24 -- 1.25 g cm$^{-3}$, though~\cite{DAetal} used the value 1.17 g cm$^{-3}$. This number can vary quite significantly
depending on the brand of PLA, and therefore we decided to measure the value for our lab conditions. 
In particular, we printed a number of squares that 
were 30 mm by 30 mm by 1.6 mm. Since the 
 the top and bottom of a print are $0.8$ mm solid filament, 
the print is guaranteed to be 100\% infill. 
We printed fifteen test tiles, roughly half with Makerbot brand filament
 and half with an off brand filament. Most of our test tiles were printed in part on 
 the same Makerbot  printer and slicer as the floating objects, but we also printed
 some on a Monoprice Mini. We 
  found that the printer used was not an important factor for the mass of the test tiles, but the 
brand was an important factor.  For Makerbot brand filament and off-brand filament 
respectively,   our measured densities were $1.082 \pm 0.017$ and 
$1.152\pm 0.010$ in g cm$^{-3}$. We have chosen the value $\rho_{PLA} = 1.15$ g cm$^{-3}$ for our 
calculations.

\begin{figure}[tb]
\includegraphics[width=0.7\textwidth]{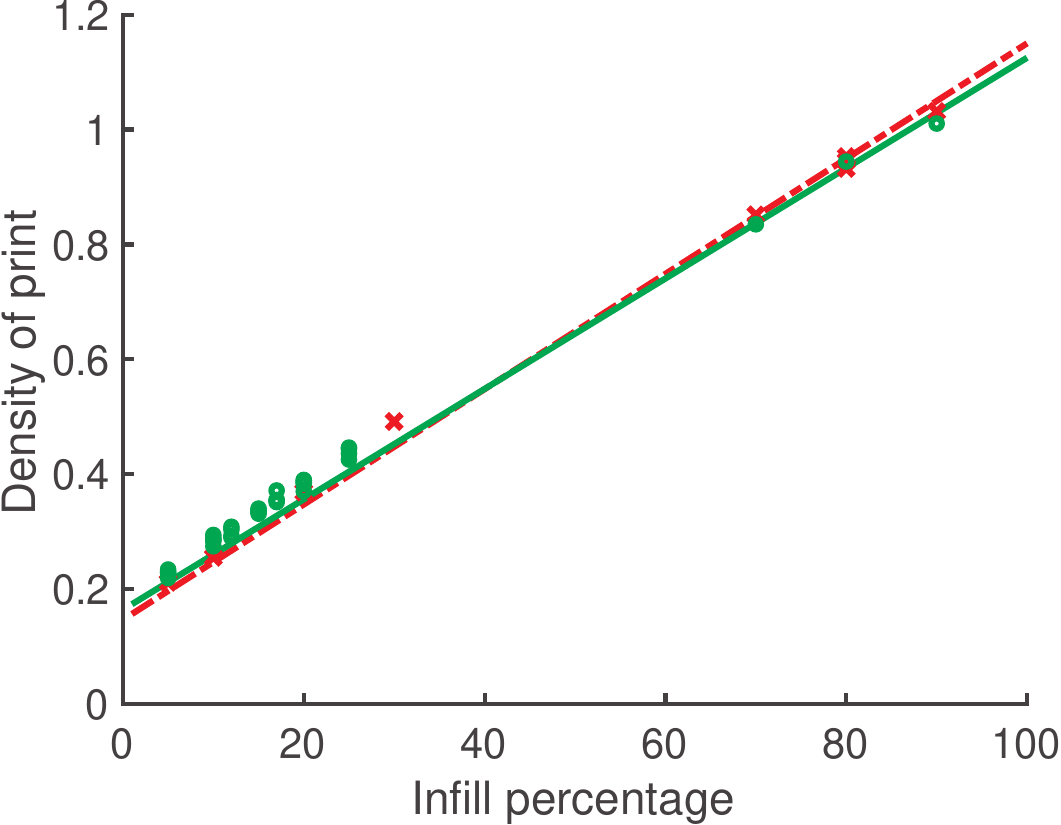}
\caption{Density of prints  as a function of infill. Prints 
have 30mm by 30mm cross section and 
60mm length. The predicted values
without hole (dashed red) and with hole (solid green), graphed 
along with the measured values without hole (red x) and with hole (green o). }
\label{densityest}
\end{figure}

Taking into account the  infill percentage, shells, and measured value
for $\rho_{PLA}$, we expect the mass of our object to be given by
$\rho_{\rm PLA}$ times $V_{\rm ext}$, the volume of the outer layer plus
$\rho_{\rm PLA}$ times $I$, the infill fraction, times $V_{\rm int}$,
the volume of the interior of the print. If there is a nail embedded in
the hole, then we additionally have to include this value in our
calculation. We have described this calculation above in
\eqref{eq:effective_density_hole}. Figure~\ref{densityest} shows the predicted values
for prints without a nail hole (dashed red) with a nail hole (solid green). 
We also show the measured values for the prints we have made without a nail hole
(red x) and with a nail hole (green o). 
Note that while quite useful to be able to 
predict mass of a print from the infill density, all of our calculations of $R$ that 
we used for particular prints in our experiments are based on 
the mass of the print measured directly using a digital scale.

It is critical for the print to be uniform in the longest direction, as
our calculations  consider this direction  as completely invariant.
Therefore we printed with the long direction going from top to bottom so
that the infill is identical at each cross section.

Since our prints were printed on a Makerbot printer, we have used the
Makerbot proprietary slicing software Makerbot Print. However, there is
nothing about the process that could not be modified to other slicer
software.

The objects are ready to float!


\section{Experiments and Data Acquisition}
\label{sec:Experiments}

Once the objects were printed our objectives turned to floating these objects and obtaining measurements associated with the 
stable floating orientations that we could compare with the theory.    
After floating the objects in a test tank to observe qualitative behavior (e.g.~confirming the objects floated, identifying the number of stable orientations, etc.)
we turned to gathering quantitative information in the form of stable floating orientations.

We used a tank with clear flat sides and filled with tap water (see Figure~\ref{experimentalsetup}).
In order to measure angles associated with a stable floating orientation the object was carefully placed by itself in the tank and allowed to 
come to rest.    In order to ease the task of keeping the object in a position with its square cross section facing the side of the tank where the
camera was positioned we placed vertical guides a small distance away from the object.  One of these guides was a ruler that we could later
use in the image analysis for calibration.  A typical view of a floating object is shown in Figure~\ref{fig-typical_experiment}.

\begin{figure}[tb]
\includegraphics[width=0.8\textwidth]{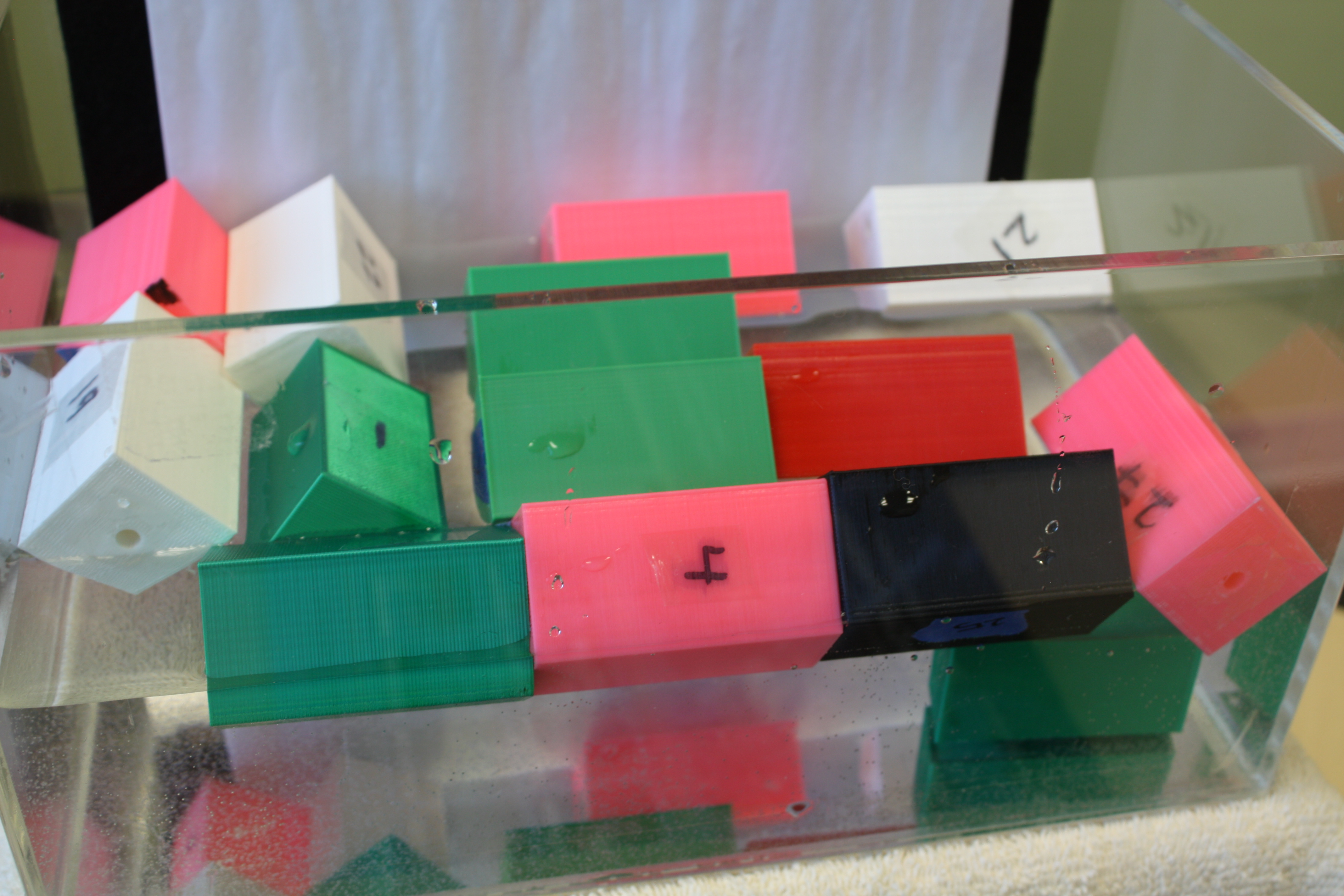}
\caption{Our experimental setup involves floating 3D printed objects mostly with 
square cross sections.  An occasional print had density ratio $R >1$ and did not float. }
\label{experimentalsetup}
\end{figure}

%
\begin{figure}[tb]
\includegraphics[width=0.8\textwidth]{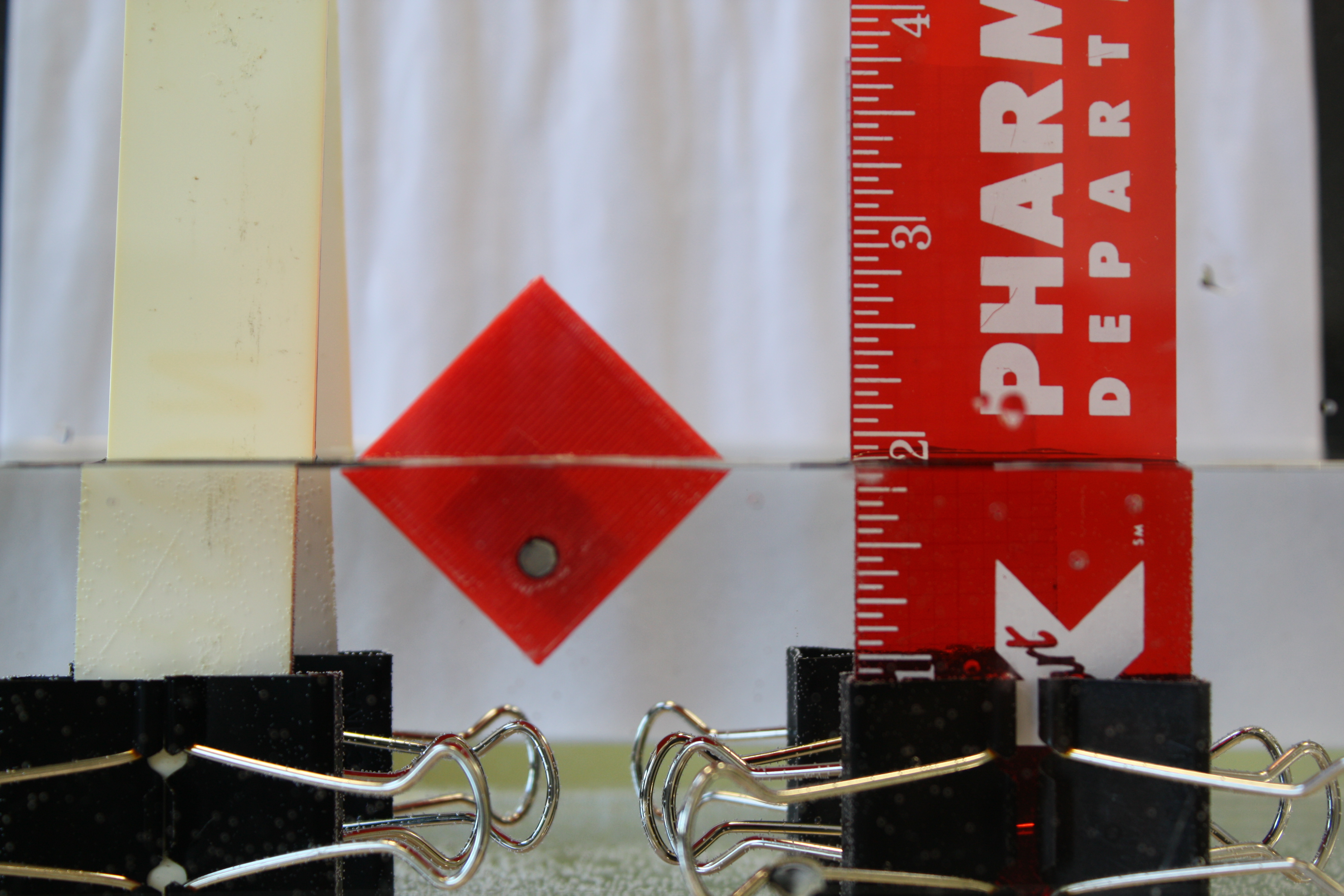}
\caption{A typical view of the experimental set up to measure angles associated with stable floating orientations.  Vertical guides were
used as an aid to keep the object floating in the part of the tank on which the camera was focused.  Care was taken to assure that
contact with underwater objects was avoided and any incidental contact of the object with the guides did not alter its floating orientation.
We also took care to prevent contact of object with the sides of the tank.}
\label{fig-typical_experiment}
\end{figure}
%

Digital images were obtained with a Canon EOS Rebel XSi used in manual focus mode.  A remote shutter release was used to avoid bumping the camera, 
which was mounted on a tripod sitting on the same table as the tank.  Care was also taken to line up the camera at the water level and to have the object floating
in a direct line to the camera.   A certain amount of refraction could be observed, especially in directions off the main viewing axis.

We used {\tt Matlab's} {\tt grabit.m} software to extract information from each image such as the one in Figure~\ref{fig-typical_experiment}.
We first calibrated the view by selecting points on the ruler in view on the right, taking care to use points above the waterline to avoid 
distortion of the ruler by the water.
For a floating square we then identified 6 points in the image.  Two of these points were chosen on the waterline relatively far from the
object on both sides so that the orientation of the waterline could be obtained.  Four other points were chosen at the four corners of the square (clockwise starting from the top of the square).  
With this information we obtained four vectors by taking differences of adjacent vertices and used $\cos \theta = a^T b / \| a \|_2 \|b \|_2$ where $b$ is a waterline vector and 
$a$ is one of square-side vectors.  In the example of Figure~\ref{fig-typical_experiment} one pair of sides (northwest and southeast) were used to identify an estimate for $\theta$ while
the other pair of sides (northeast and southwest) were used to estimate the complementary angle with respect to $\pi/2$.  
In the case of the Mason M, we used a similar approach but chose four points along the bottom two `legs' of the M to make angle measurements.

For prints with holes, we covered the holes (void or filled with a nail) with waterproof tape to keep water out.   In cases with no hole or when the hole was in the center of the object,
we used a very small amount of nail polish to mark one corner of the square.  This allowed us to assess asymmetry of the object that was present either unplanned or by design.

For each object we first measured its mass (with hole either left as a void or filled with a nail) using a digital scale and from that obtained an effective density
according to $\rho_{\rm{obj}} = M_{\rm{obj}} / V_{\rm{obj}}$.  
This corresponds to the material density for a uniform object or the effective uniform density for a non-uniform object.
The volume of the object $V_{\rm{obj}}$ was computed either from measurements of the dimension and known formulas (e.g. width times length times height for a rectangular box)
or using {\tt polyarea} in {\tt Matlab} (for more general cross sections) multiplied by the 3D print scaling.

\section{Computational Approaches}
\label{sec:compexp}

Various {\tt Matlab} codes were developed to compute results and analyze our floating shapes.    

\subsection{Square Cross Sections}

Various {\tt Matlab} codes developed for the square cross sections have been posted in 
a GitHub repository~\cite{GITHUB}. 
These include
\begin{itemize}
\item {\tt SQUARE\_PE\_GxGy.m}: This code is based on the potential energy formulas outlined in the section on the square.  
It generates for given values for the density ratio $R$ and the center of gravity $(G_x,G_y)$ the computed potential 
energy landscape and includes options to plot the potential energy landscape and the square floating in a stable orientation. This code
was used to generate the theoretical predictions in Figures~\ref{fig-square_PE_and_SHAPE_10}, \ref{fig-square_PE_and_SHAPE_31n},
and~\ref{fig-square_PE_and_SHAPE_56}, for example.
\item {\tt SQUARE\_ANGLES\_GxGy\_R\_Looper.m}: This code is based on the formulas given in the section on the square and plots for a given 
center of gravity $(G_x,G_y)$ and specified range $R \in [R_{min},R_{max}]$ the stable orientation angles.  This code, for example, 
was used to generate Figures~\ref{fig-THETA_VS_R_zeroG} and~\ref{fig-THETA_vs_R_32nail}.
\end{itemize}

\subsection{General Polygonal Cross Sections}\label{sec:polygoncode}

In the case of a long floating object of uniform density 
with a general polygonal cross section, 
though it is no longer possible to give as 
detailed an analysis as in the case of a square, we are still able to apply Archimedes' Principle 
and calculate the center of gravity, center of buoyancy, and the potential energy of a floating 
configuration, as we describe below. 

\subsubsection{Computation of Stable Floating Configurations}

    When doing the calculations for a general polygon, we wrote a program that
    takes two vectors with the $x$ and $y$ values of our polygon and a density
    ratio and goes through the following algorithm.  Firstly, it takes the shape
    of the given polygon and calculates the center of gravity of the object,
    assuming a uniform density throughout. Secondly, we identify the correct
    placement of the waterline determined by Archimedes' Principle that establishes
    the correct submerged area to total area ratio. This is done
    by use of a bisection method where at an orientation we take the lowest
    point of our object and create a waterline through it and make that our lower
    bound. We then take the highest point of our object and make that our upper
    bound. We then calculate the area ratio of each of our bounds and find
    the midpoint of our upper and lower bounds and calculate the area ratio
    for the waterline going through the midpoint. If the waterline through the
    midpoint is above the correct placement, it becomes our new upper bound and
    if it is below the correct placement, then it becomes the new lower bound.
    We apply the bisection method until the correct waterline is found
    for our original orientation. The correct placement of our waterline in the
    case of uniform density is where the waterline splits the object into two
    areas where the submerged area relative to the total area 
    is equal to the desired density ratio
    (e.g. for an iceberg with uniform density, the line will be such that the 
    new submerged area created by the line relative to the area of the original polygon
    will match the density ratio $0.8912$). Thirdly, we compute the center of
    buoyancy of the object by applying the same method used for the center of
    gravity except using the submerged area determined by the polygon and the
    waterline. Finally, we calculate the potential energy function for our polygon at the
    orientation and repeat the process for all angles.
    
    The input to this code is a planar polygonal region, oriented counterclockwise and a density 
    ratio of the object relative to the water. 
    The output is a plot of the potential energy landscape with respect to the 
    angle. By default, the computation is performed for uniform density. However, 
    it is possible to compute this information for objects with non-uniform density 
    if one inputs the center of gravity.

\section{Results}
\label{sec:Results}

\subsection{Floating Squares: Symmetric Case}

Figure~\ref{fig-THETA_VS_R_zeroG} shows stable equilibrium angles as a function
of density ratio $R$ for objects with square cross sections and center of
gravity at the center of the square, $\vec{G}=(0,0)$.  Various experimental
results are shown for 3D printed shapes with different effective densities. 
These effective densities have been modified as described earlier by adjusting the infill as well as
printing objects with a hole at the center which we can leave as void space or
fill with a denser object, such as a nail.  

As a visualization, we show several potential energy landscapes and selected
shape orientations predicted from the theory and observed experimentally.
Figure~\ref{fig-square_PE_and_SHAPE_10} shows the potential energy landscape for
$\vec{G}=(0,0)$ for a case with $R=0.23296$ which corresponds to a region in
parameter space where eight stable orientations exist.  The eight orientations
in this case come in pairs, as indicated in the lower portions of  Figure~\ref{fig-square_PE_and_SHAPE_10}.  
The eight experimentally-observed orientations are shown by the points at $R=0.23296$ in 
Figure~\ref{fig-THETA_VS_R_zeroG}.\footnote{A keen eye will note both blue dots and red
dots in this sequence in Figure~\ref{fig-THETA_VS_R_zeroG}.  These two different sets of angle estimates correspond to the two different
angle measurements described in the earlier section on Experiments and Data Aquisition.}
Figure~\ref{fig-square_PE_and_SHAPE_31n} shows the potential energy landscape
for $\vec{G}=(0,0)$ for a case with $R=0.4856$ which corresponds to a region in
parameter space where four stable orientations exist.  These orientations
correspond to the object floating with the corner straight up.
Experimental measurements for these angles correspond the angle measurements shown at $R=0.4856$ in Figure~\ref{fig-THETA_VS_R_zeroG}.
Figure~\ref{fig-square_PE_and_SHAPE_56} shows the potential energy landscape for
$\vec{G}=(0,0)$ for a case with $R=0.9322$ which corresponds to a region in
parameter space where four stable orientations exist.  These orientations
correspond to the object floating with the flat side of the square straight up.
Experimental measurements for these angles correspond the angle measurements shown at $R=0.9322$ in Figure~\ref{fig-THETA_VS_R_zeroG}.

A more thorough experimental exploration of the parameter space, particularly in the region around $R=0.25$, where eight stable orientations can be identified 
has recently been done by Feigel \& Fuzailov \cite{FF2021}.  Those authors used a larger floating object (114 mm $\times$ 114 mm $\times$ 353 mm) constructed using two tin tea 
boxes that could be fitted/weighted with additional bars and magnets to adjust the object's effective density.

\begin{figure}[h!]
\begin{center}
\includegraphics[width=4.0in, angle = 0]{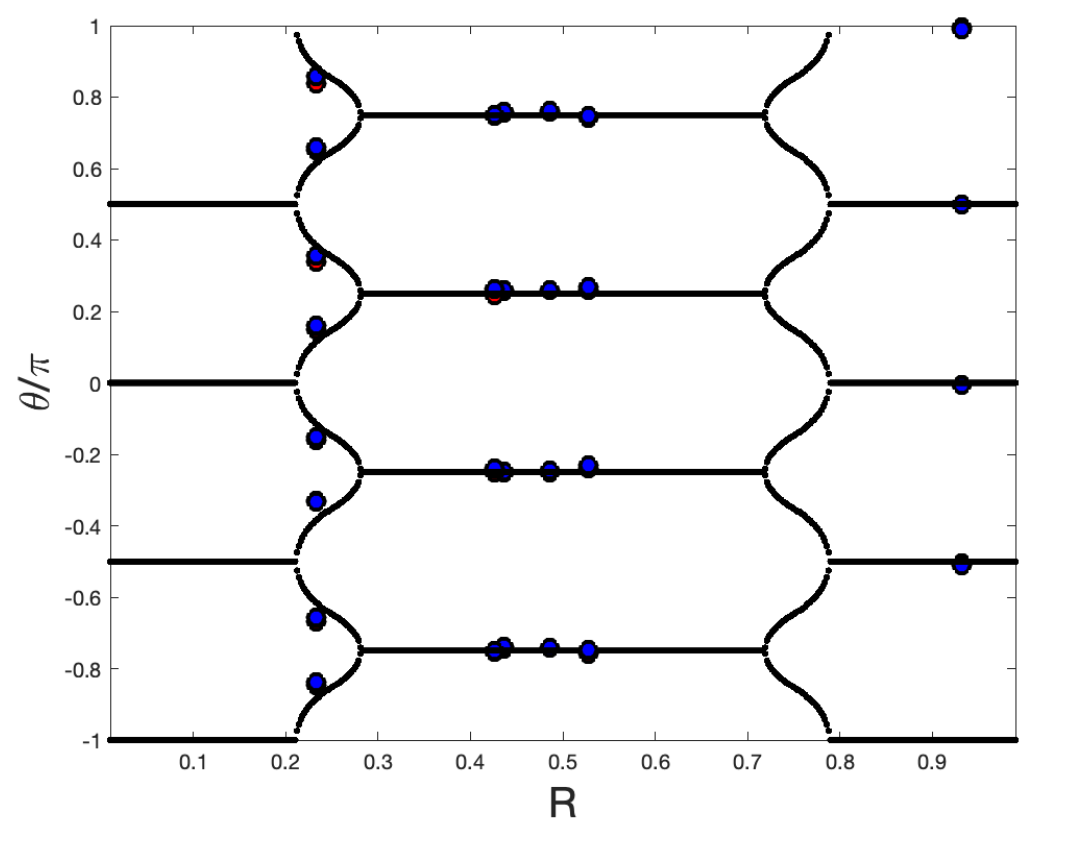}
\end{center}
\caption{This plot shows stable floating orientations versus density ratio $R$ for $\vec{G}=(0,0)$.  The marks show various measured equilibrium orientations for several
of our 3D printed objects.
}
\label{fig-THETA_VS_R_zeroG}
\end{figure}

\begin{figure}[h!]
\begin{center}
\includegraphics[width=2.75in, angle = 0]{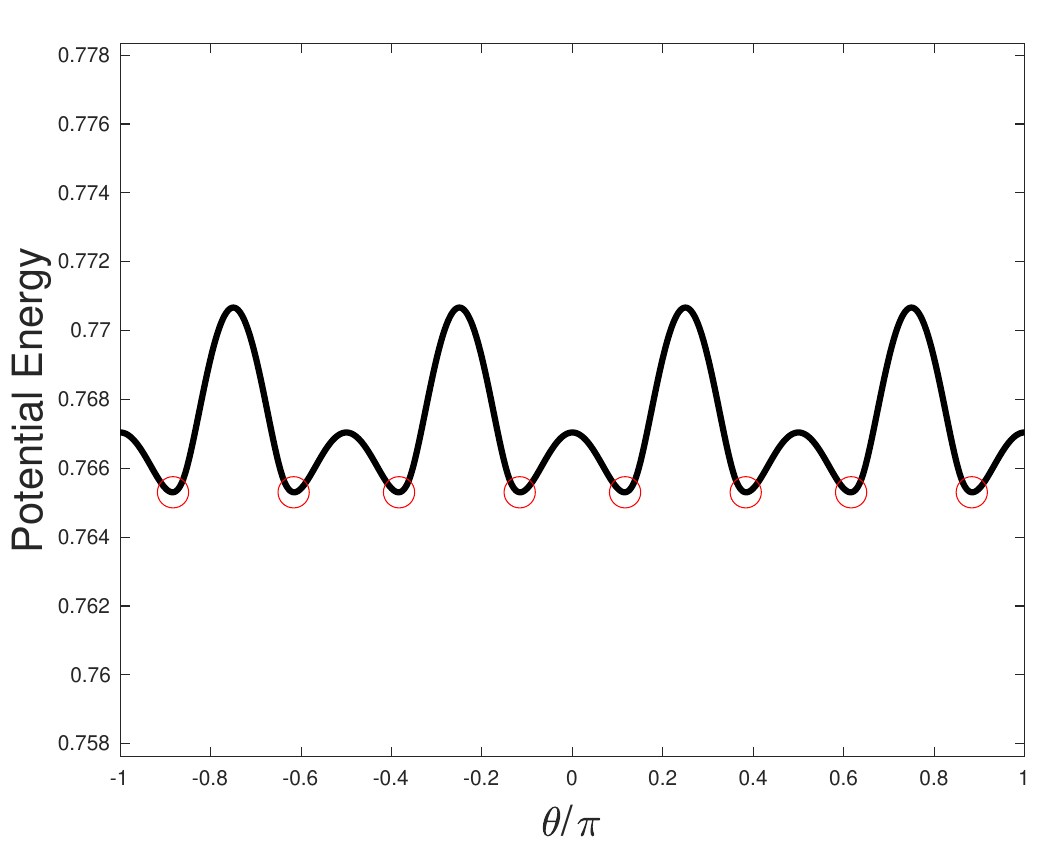} \\ 
\includegraphics[width=2.0in, angle = 0]{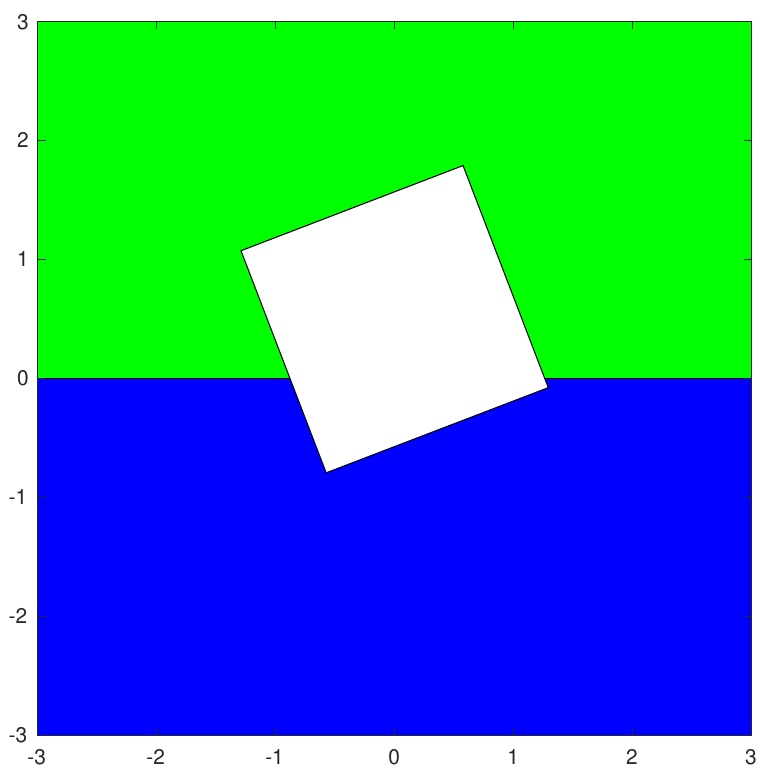}
\includegraphics[width=2.0in, angle = 0]{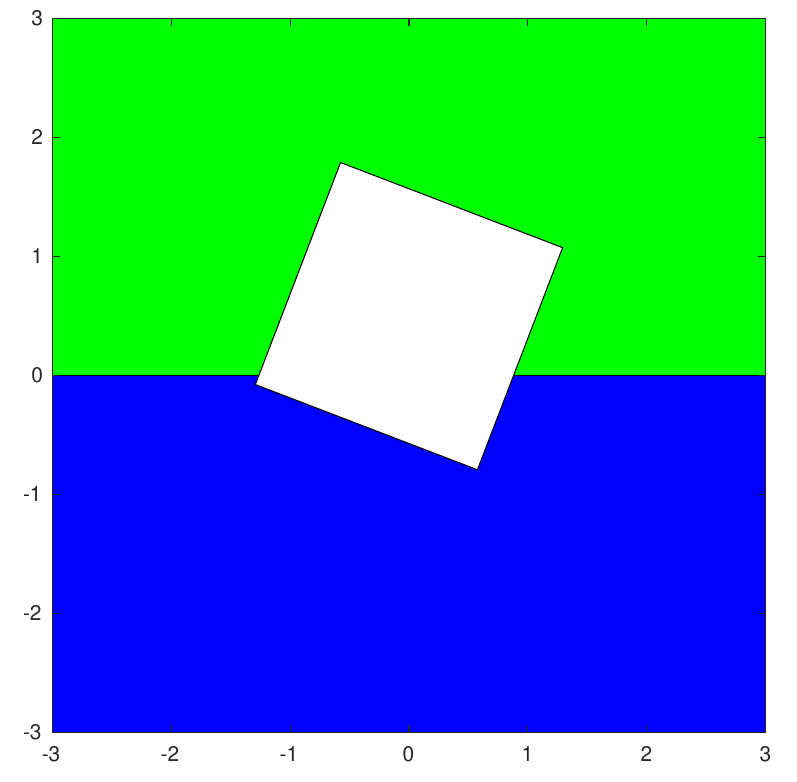} \\ 
\includegraphics[width=1.8in, angle = 0]{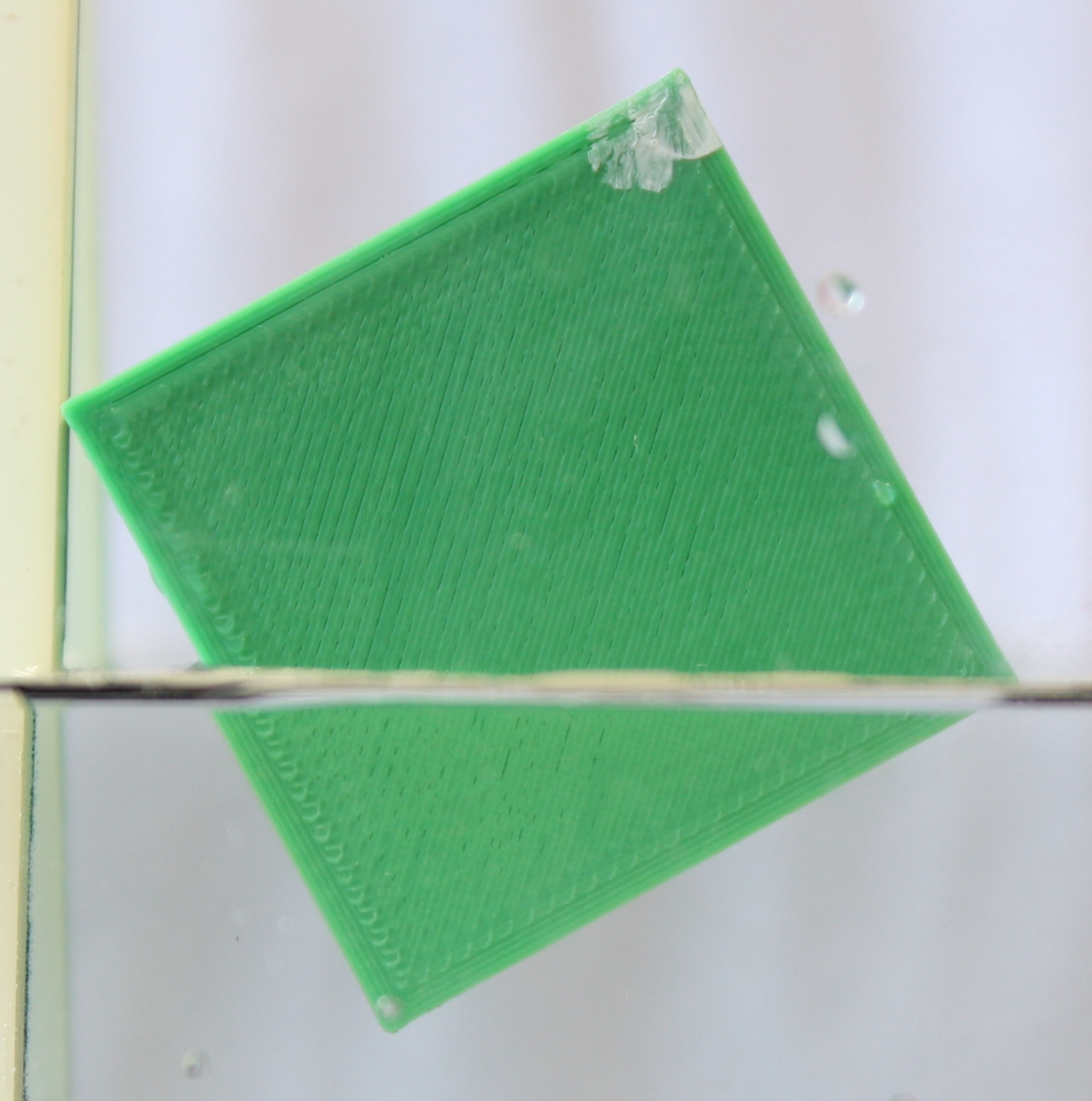} \hspace{0.2in}
\includegraphics[width=1.8in, angle = 0]{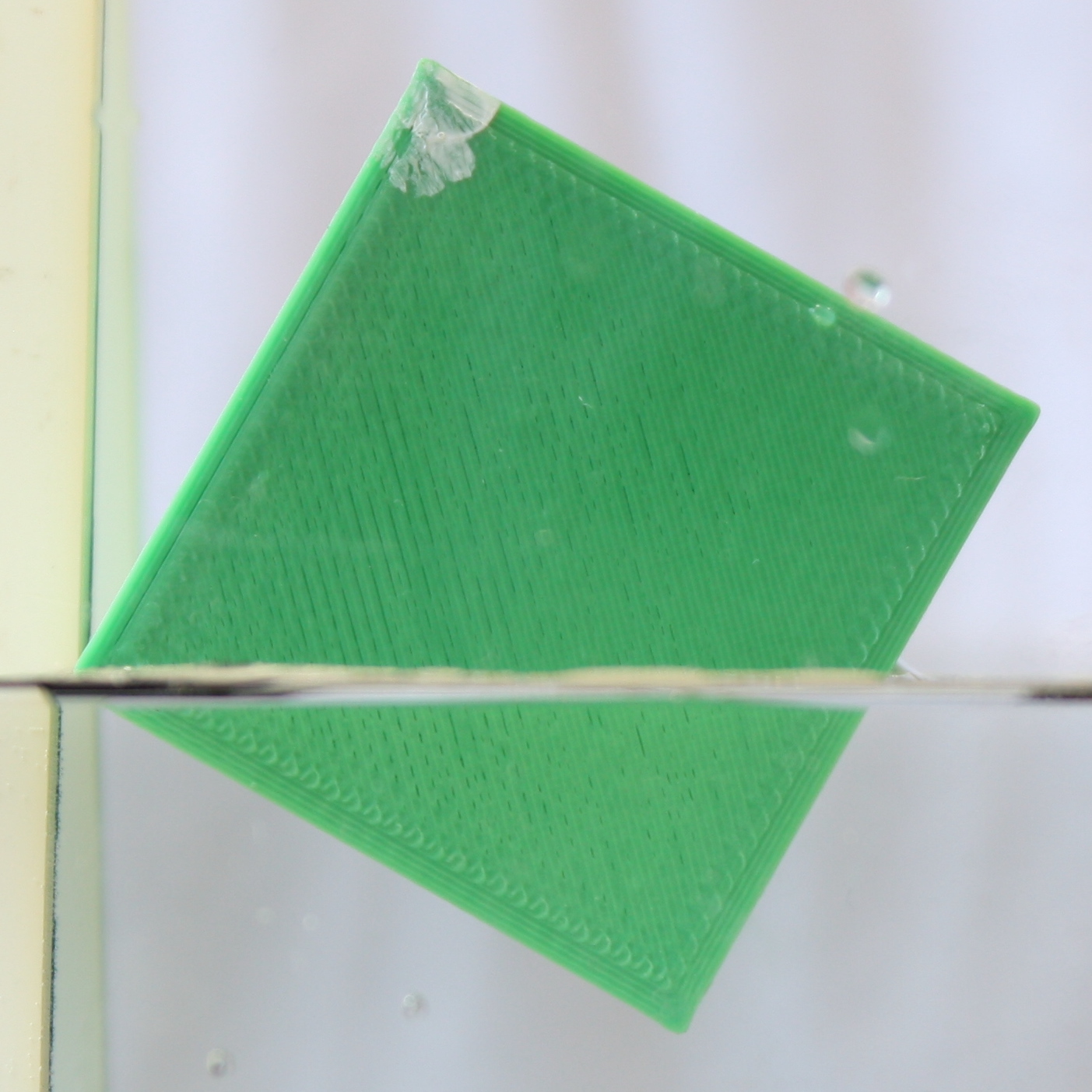}
\vspace{0.15in}
\end{center}
\caption{The upper plot shows the potential energy landscape for the square with $\vec{G}=(0,0)$ and $R=0.23296$.  There are eight stable equilibria.  
Two stable floating configuration corresponding to the orientations closest to $\theta=0$ are shown in the plots in the second row.  By symmetry these also match with the 
other stable orientations on $\theta \in [-\pi,\pi]$.  Corresponding experimental images are also shown in the bottom row.  Measured angles for this case are 
shown in Figure~\ref{fig-THETA_VS_R_zeroG}.}
\label{fig-square_PE_and_SHAPE_10}
\end{figure}

\begin{figure}[h!]
\begin{center}
\includegraphics[height=2.1in, angle = 0]{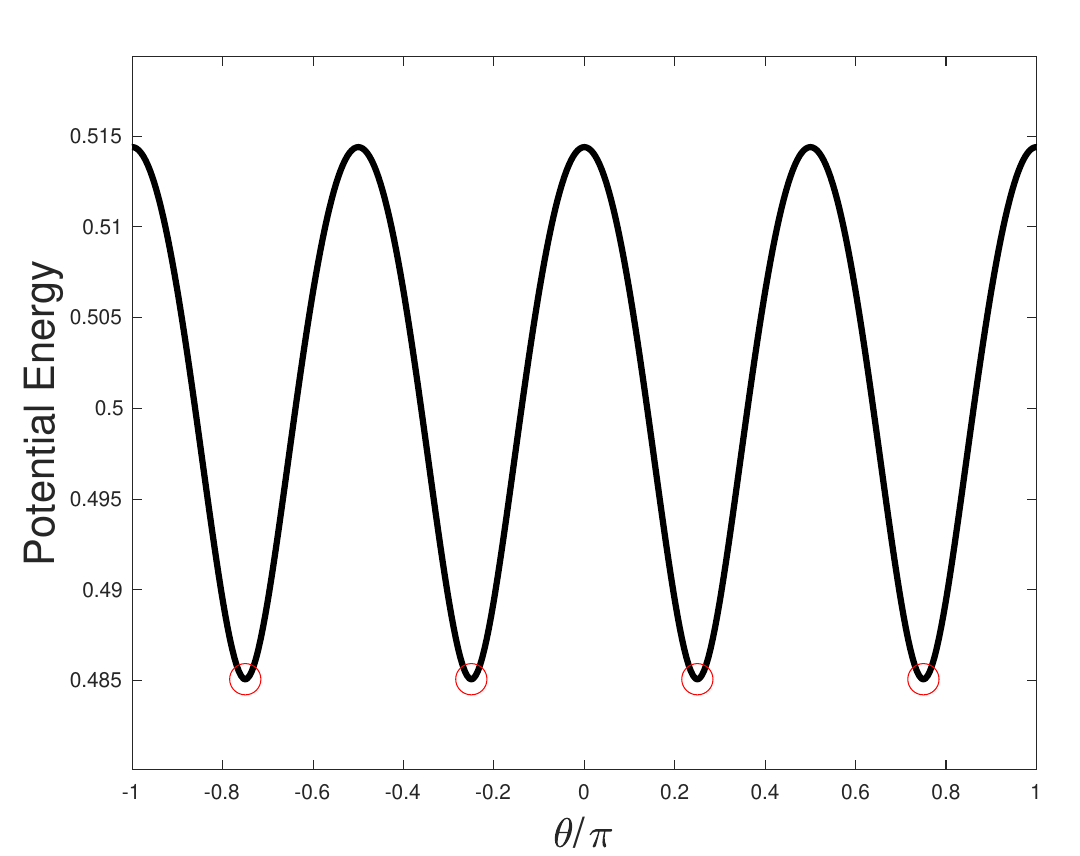}
\includegraphics[height=2.0in, angle = 0]{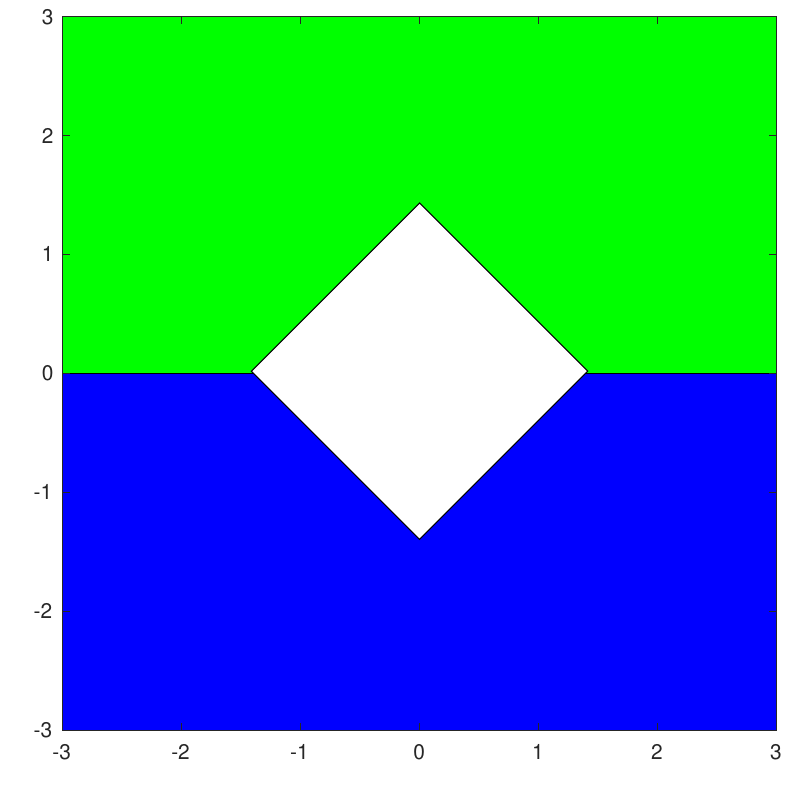}  \\ 
\includegraphics[width=1.5in, angle = 0]{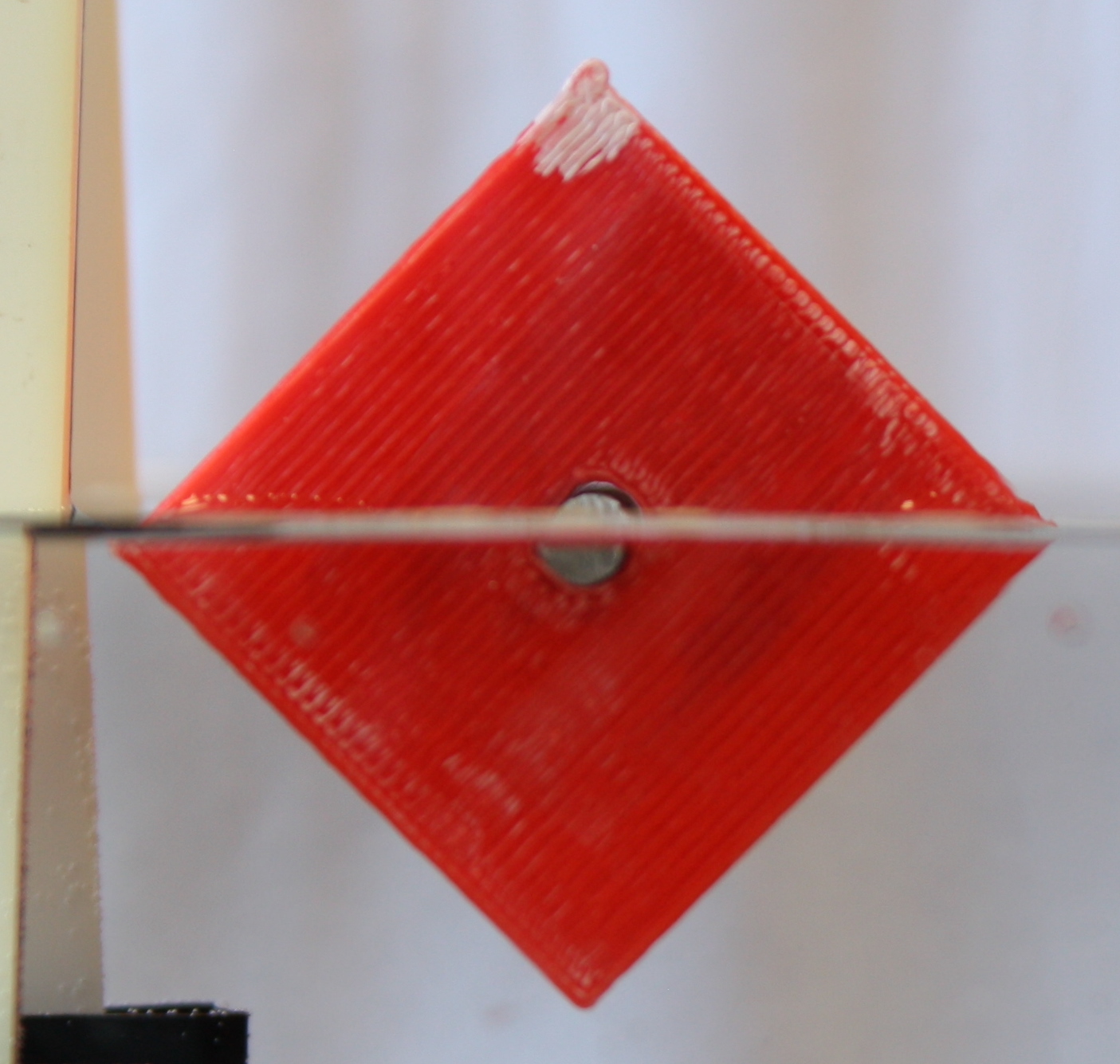} 
\vspace{0.15in}
\end{center}
\caption{The upper left plot shows the potential energy landscape for the square with $\vec{G}=(0,0)$ and $R=0.4856$.  There are four stable equilibria
corresponding to $\theta = \frac{\pi}{4} \pm \frac{\pi}{2} n$ for integer $n$.   The upper right plot shows that the square floats with vertex pointing upwards.
These four stable orientations are also observed experimentally (one such orientation is shown in the image).  Measured angles for this case are 
shown in Figure~\ref{fig-THETA_VS_R_zeroG}.}
\label{fig-square_PE_and_SHAPE_31n}
\end{figure}

\begin{figure}[h!]
\begin{center}
\includegraphics[height=2.1in, angle = 0]{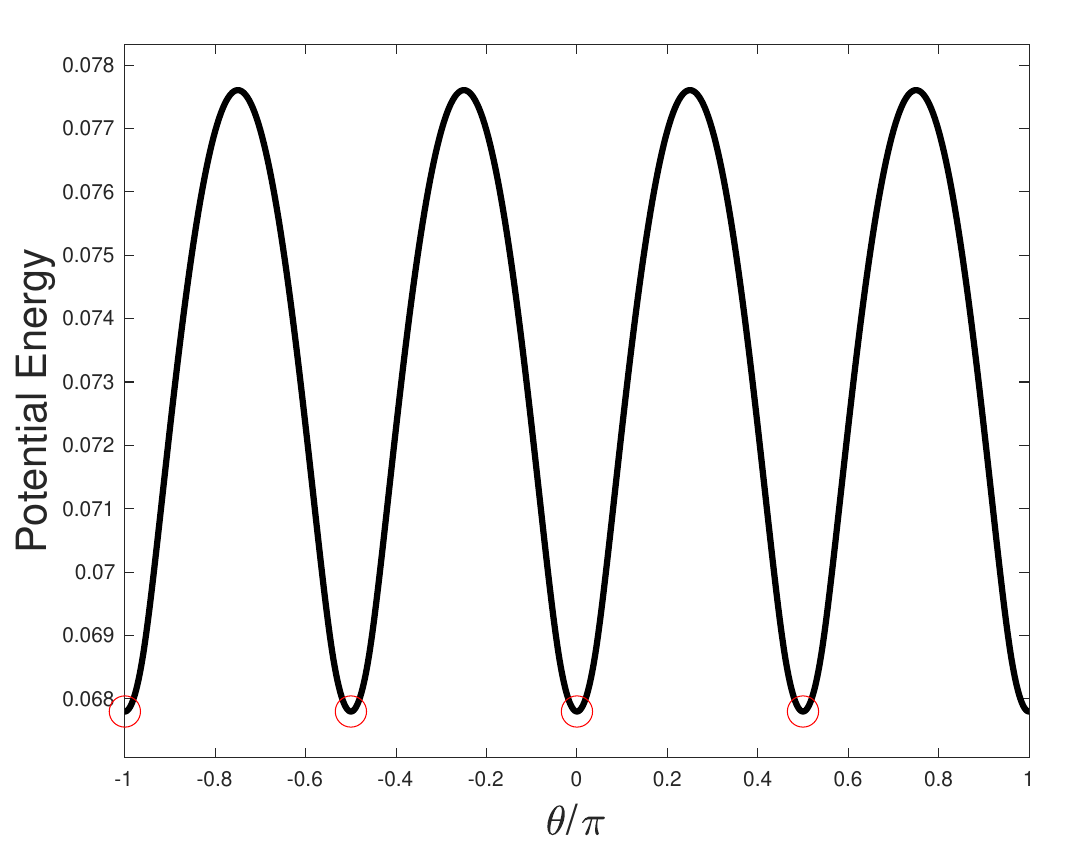}
\includegraphics[height=2.0in, angle = 0]{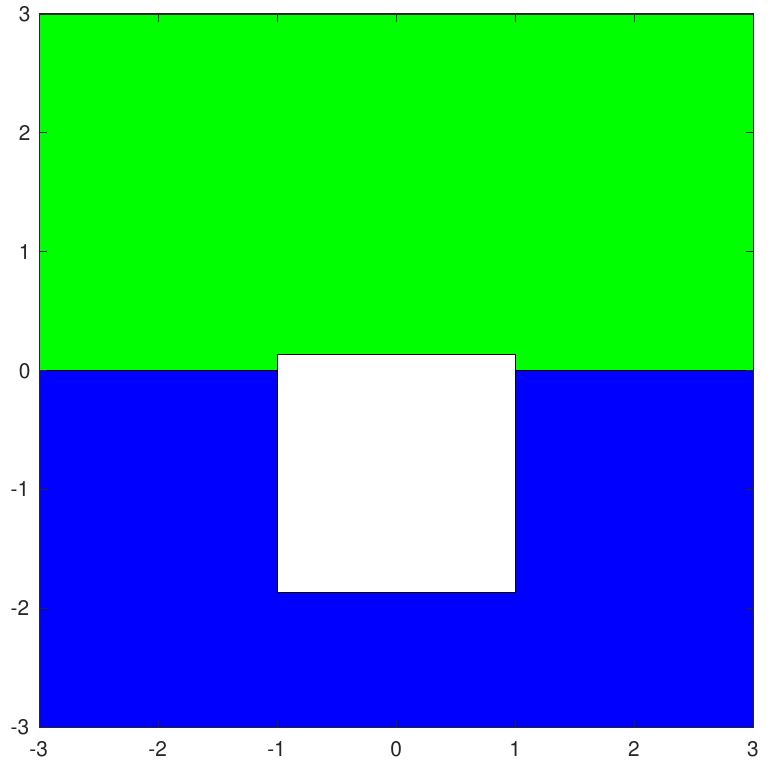}  \\ 
\includegraphics[width=1.5in, angle = 0]{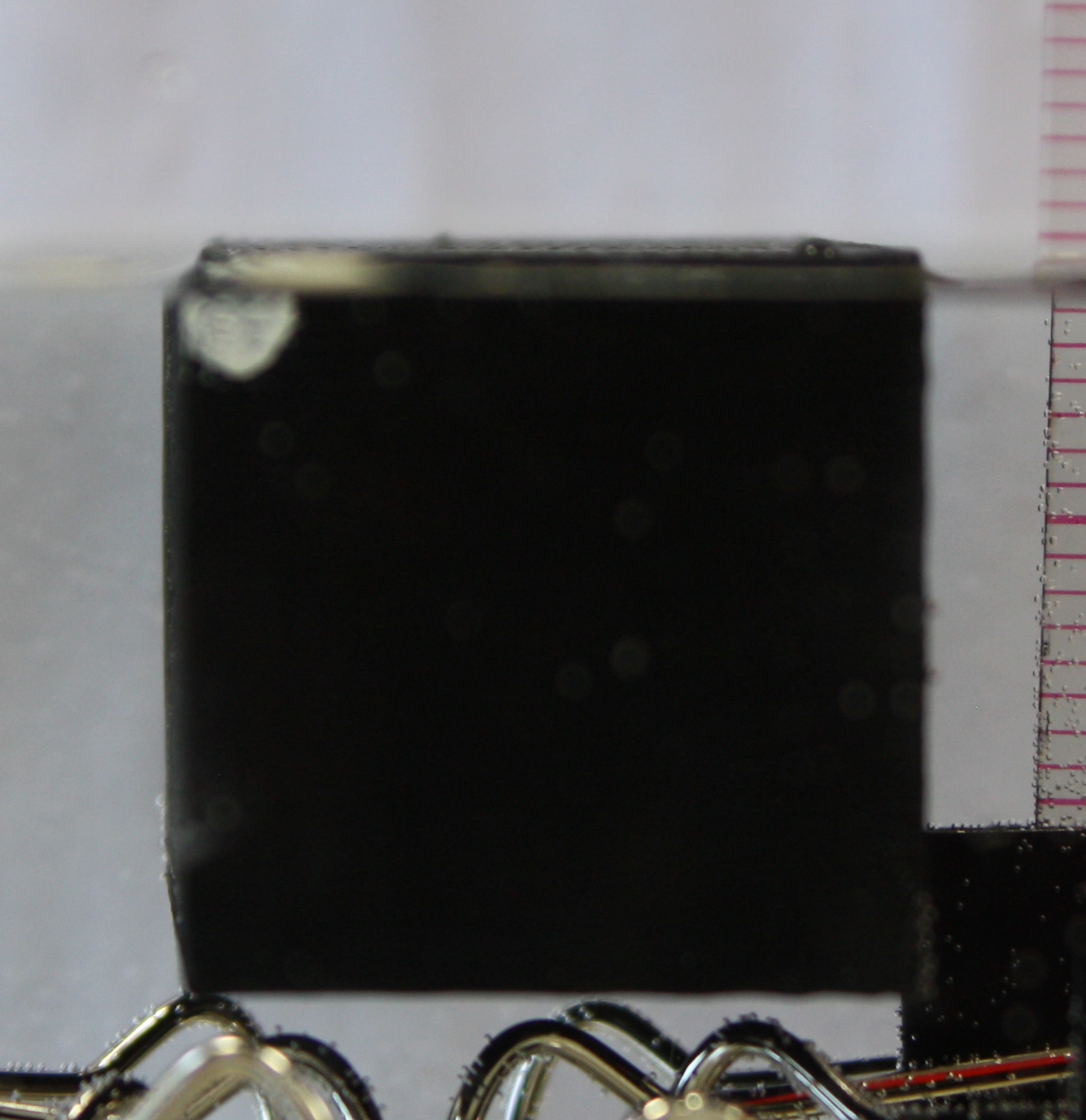} 
\vspace{0.15in}
\end{center}
\caption{The upper left plot shows the potential energy landscape for the square with $\vec{G}=(0,0)$ and $R=0.9322$.  There are four stable equilibria
corresponding to $\theta = 0 \pm \frac{\pi}{2} n$ for integer $n$.   The upper right plot shows that the square floats deep in the water with flat side up.
These four stable orientations are also observed experimentally (one such orientation is shown in the image).  Measured angles for this case are 
shown in Figure~\ref{fig-THETA_VS_R_zeroG}.}
\label{fig-square_PE_and_SHAPE_56}
\end{figure}

\subsection{Floating Squares: Breaking Symmetry}

Figure~\ref{fig-PEplots31-32-33} show the potential energy landscape for three
different prints corresponding to square cross sections with nail-filled holes at {\bf A}: $(0,0)$ (upper left plot),
{\bf B:} $(0.3,0.3)$ (upper right plot), and {\bf C}: $(0.45,0.45)$ (lower plot).  
Note that the coordinates for the holes are given in units of $s/2$ where $s$ is the length of
the side of the square.    These three have density ratios 
of $R=0.4856$ for Case {\bf A},
$R=0.4874$ for Case {\bf B},
and $R=0.4911$ for Case {\bf C}. 

For Case {\bf A} with nail-filled hole at $(0,0)$ already discussed in 
Figures~\ref{fig-THETA_VS_R_zeroG} and~\ref{fig-square_PE_and_SHAPE_31n}, 
four stable orientations, with a corner of the square pointing straight up, are predicted
theoretically and observed experimentally. 

Experiments for Case {\bf B}, with a nail-filled hole at $(0.3,0.3)$,
are shown in Figure~\ref{fig-EXP_plots_32nail_cropped}.
In this case the center of gravity $\vec{G}$ is no longer at the center of the square and the
symmetry is broken. Despite this broken symmetry, four stable orientations are still observed experimentally
as shown in Figure~\ref{fig-EXP_plots_32nail_cropped}. Our
theory predicts that $(G_x,G_y) = (0.06789,0.06789)$ and the corresponding
potential energy plot is shown by the solid curve in the upper right plot of
Figure~\ref{fig-PEplots31-32-33}.  There are only two stable
orientations predicted at this value of $\vec{G}$ and so our theory
does not match the experimental observations.  However, two other curves are shown in the
upper right plot in Figure~\ref{fig-PEplots31-32-33} -- the dashed
curve has $(G_x,G_y) = (0.05,0.05)$ and the dash-dotted curve has $(G_x,G_y) =
(0.04,0.04)$.  These curves with nearby values of $\vec{G}$ indicate that there are two other stable orientations nearby.
One explanation for the discrepancy between experiment and theory is that the center of gravity of our print is not exactly where
we predict it to be, perhaps due to uncertainties in the infill structure of the print. 
Another potential source of imprecision in the center of gravity is the nail not fitting 
precisely in the middle of the hole, but the nail seems to fit snugly in the hole, so it seems a less likely explanation. 
Also, there is evidence of menisci at the solid--liquid--air contact line in Figure~\ref{fig-EXP_plots_32nail_cropped} suggesting that surface tension could provide
a large enough force to hold the print in an otherwise slightly unstable configuration.

For Case {\bf C}, with a nail-filled hole at $(0.45,0.45)$ and $(G_x,G_y)=(0.1017,0.1017)$ the potential energy landscape shown in the lower plot of Figure~\ref{fig-PEplots31-32-33} 
indicates that only two stable equilibria exist.  For this case both theory and experiment are
in agreement on the number of stable equilibria.  
The experimental images for this case are not shown but are very similar to the upper left and lower left images of
Figure~\ref{fig-EXP_plots_32nail_cropped}.  Corresponding stable orientations with the nail to the left or right as in the upper right and lower right
images of Figure~\ref{fig-EXP_plots_32nail_cropped} no longer exist for Case {\bf C}.


\begin{figure}[h!]
\begin{center}
\includegraphics[width=2.25in, angle = 0]{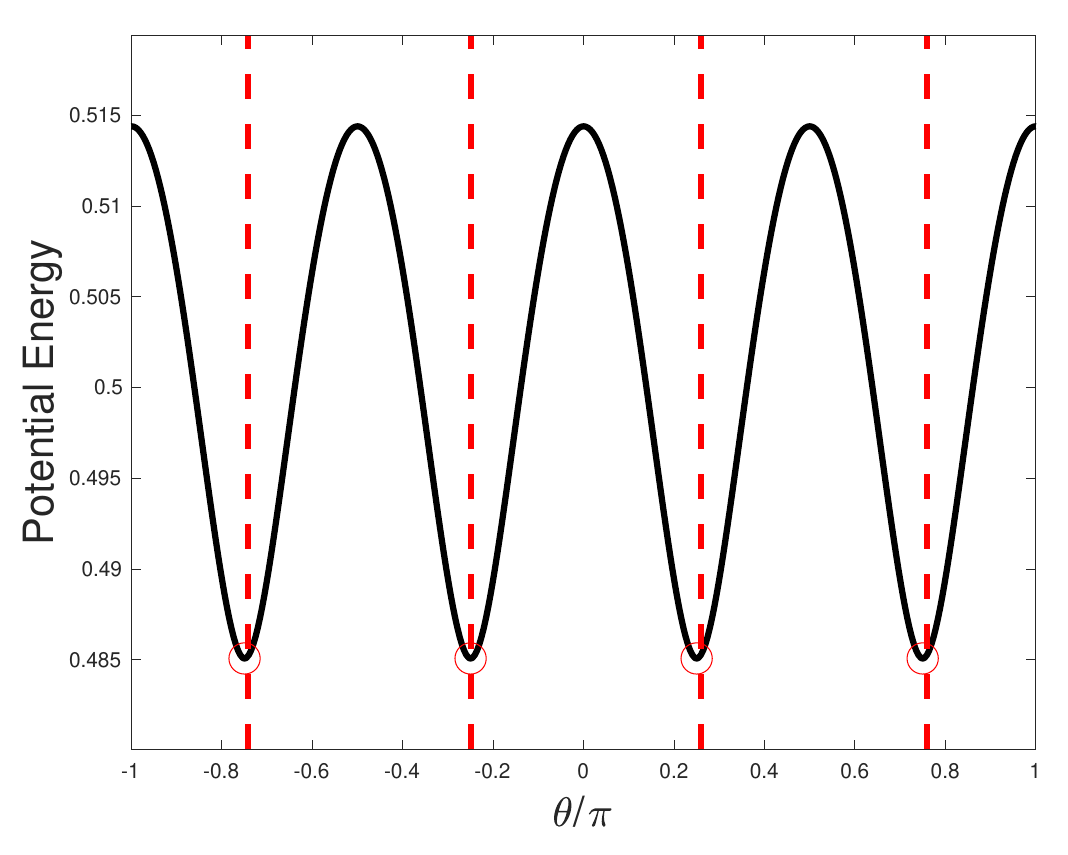} 
\includegraphics[width=2.25in, angle = 0]{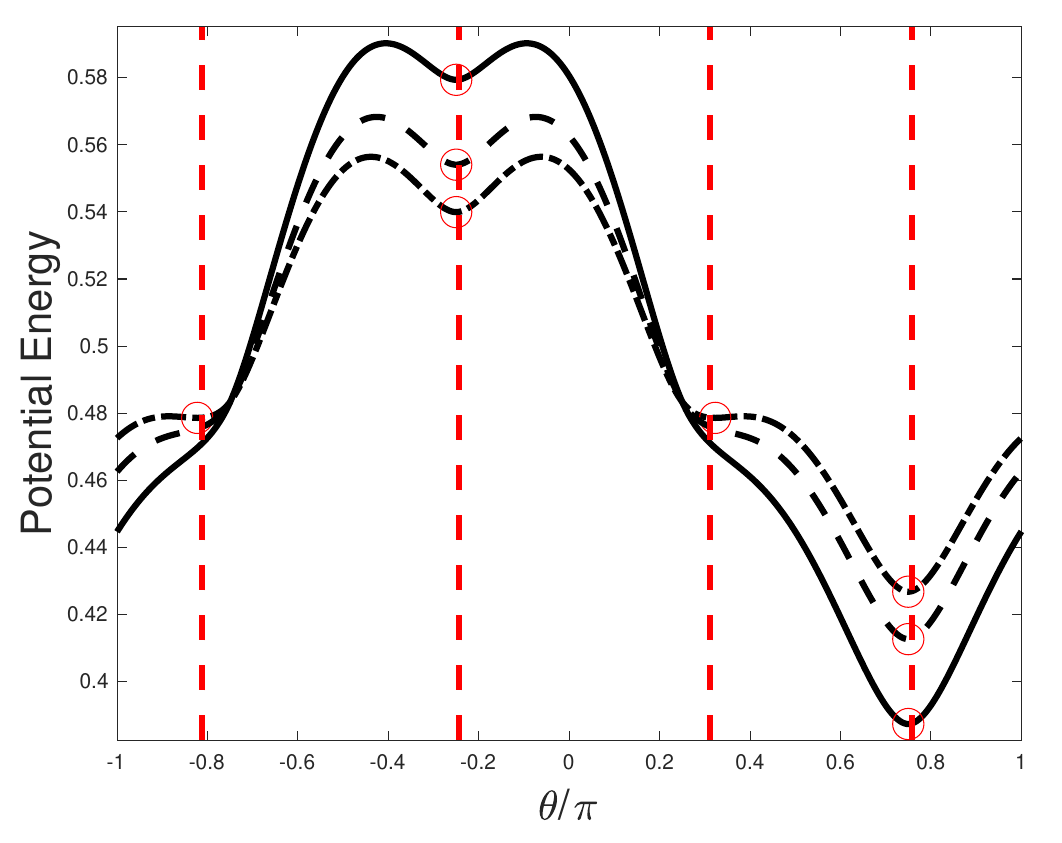}  \\ 
\includegraphics[width=2.25in, angle = 0]{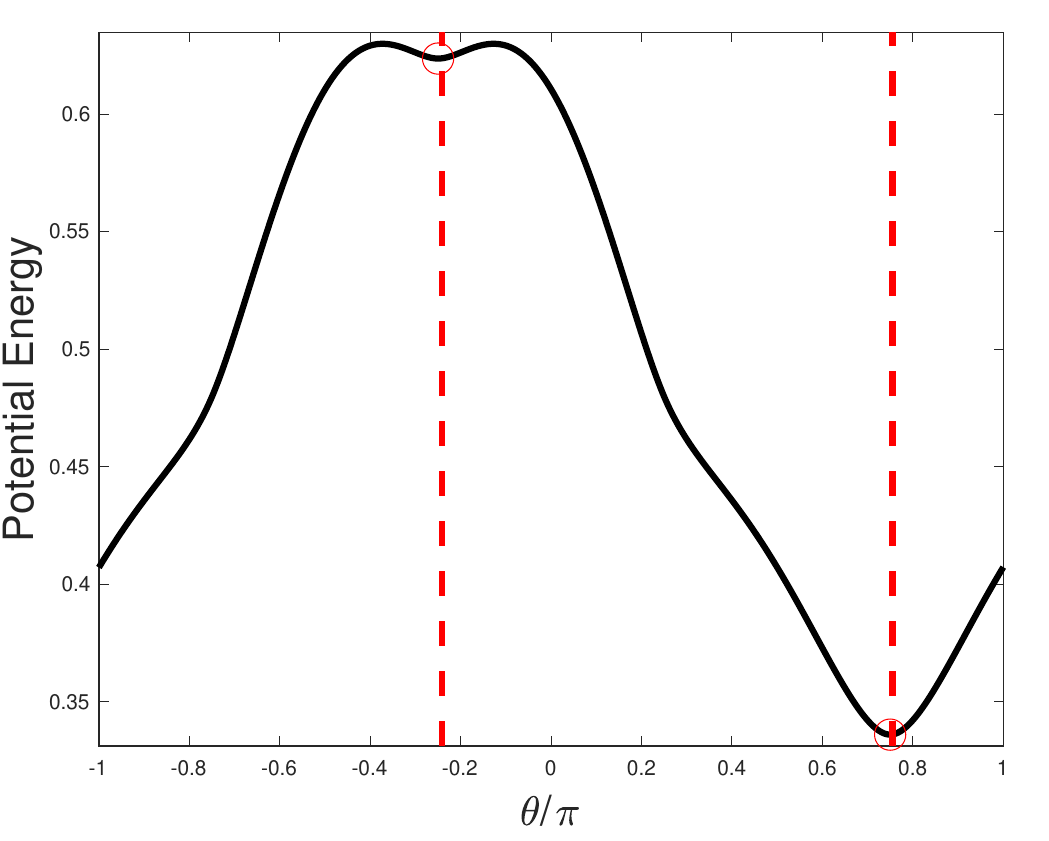} \\ 
\end{center}
\caption{Potential energy plots for squares with nail-filled holes at {\bf A}: $(0,0)$ (upper left plot),
{\bf B:} $(0.3,0.3)$ (upper right plot), and {\bf C}: $(0.45,0.45)$ (lower plot).
The black curves (solid, dashed, or dash-dotted) show the potential energy function defined in equation~(\ref{eq:PE_formula_SQUARE}).  
The open red circles indicate the theoretical local minima of the potential energy.  The vertical red dashed lines show the angles at which experimentally-floating squares appear to be stable.    
Case {\bf A} has four-fold symmetry and four stable orientations are predicted theoretically and observed experimentally (see
also Figure~\ref{fig-square_PE_and_SHAPE_31n}).   
For Case {\bf B}, in which the symmetry is broken, our theory predicts only two stable orientations but we observe four experimentally (these four orientations are
shown in Figure~\ref{fig-EXP_plots_32nail_cropped}).    The dashed and dash-dotted black curves in the upper right plot show two other potential energy landscapes
for nearby values of $\vec{G}$ (see text for details) indicating the presence of nearby stable states.
For Case {\bf C}, the square is farther from symmetric and both theory and experiment predict only two stable
orientations.   We do not show experimental images for Case {\bf C}, but they are similar to the upper left and lower left images in Figure~\ref{fig-EXP_plots_32nail_cropped}.}
\label{fig-PEplots31-32-33}
\end{figure}

\begin{figure}[tb]
\includegraphics[width=0.4\textwidth]{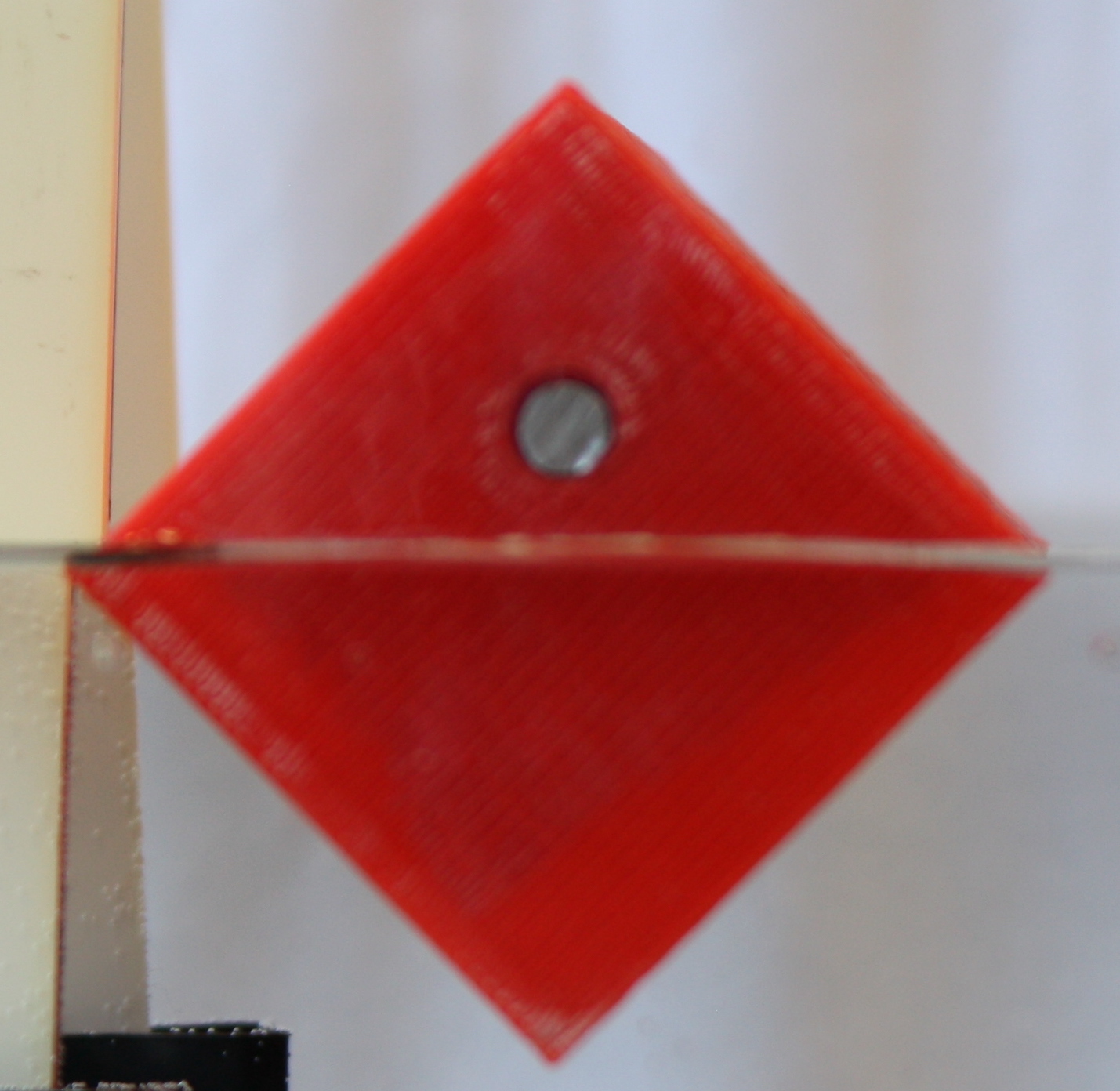}
\includegraphics[width=0.4\textwidth]{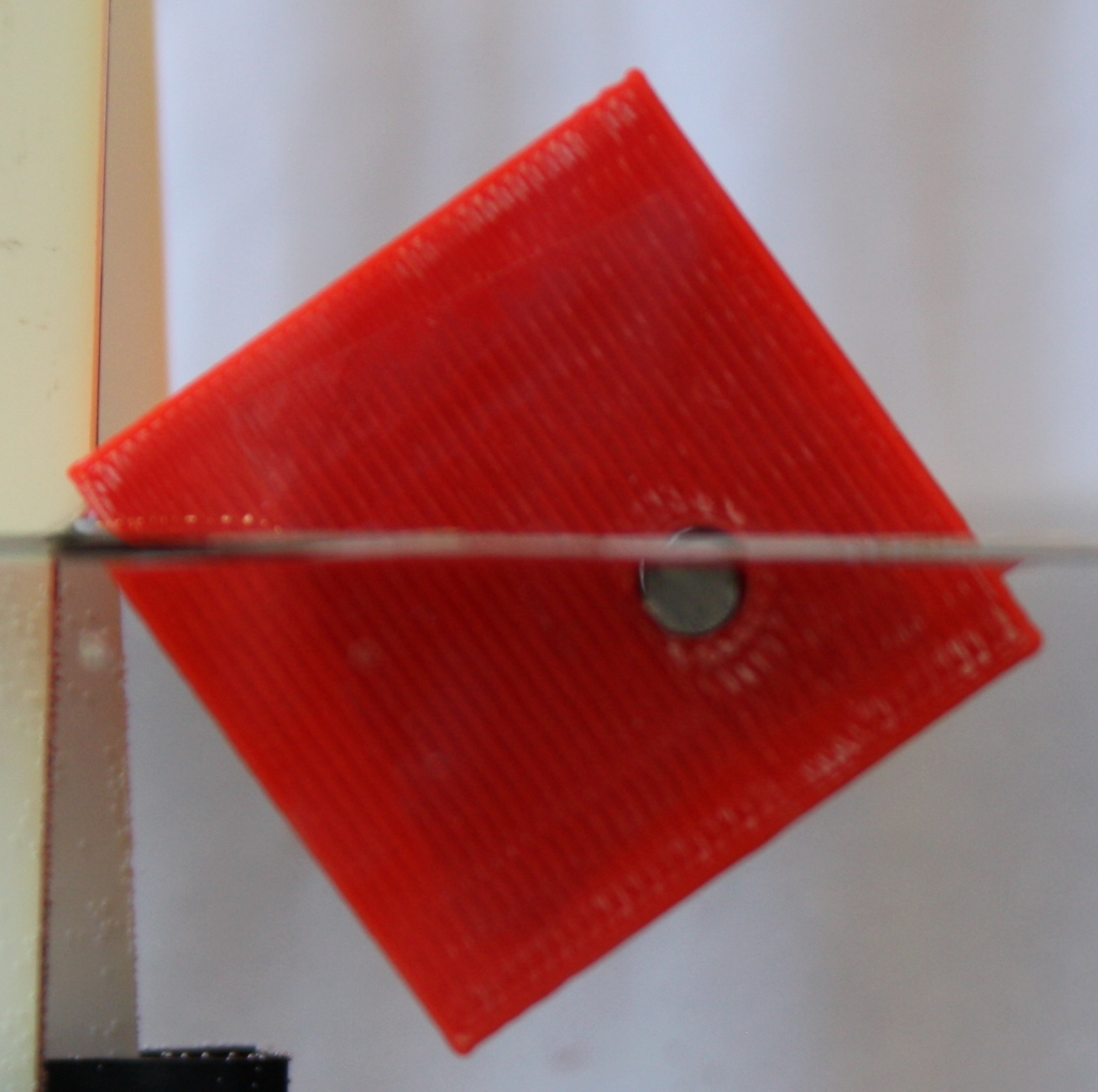} \\
\includegraphics[width=0.4\textwidth]{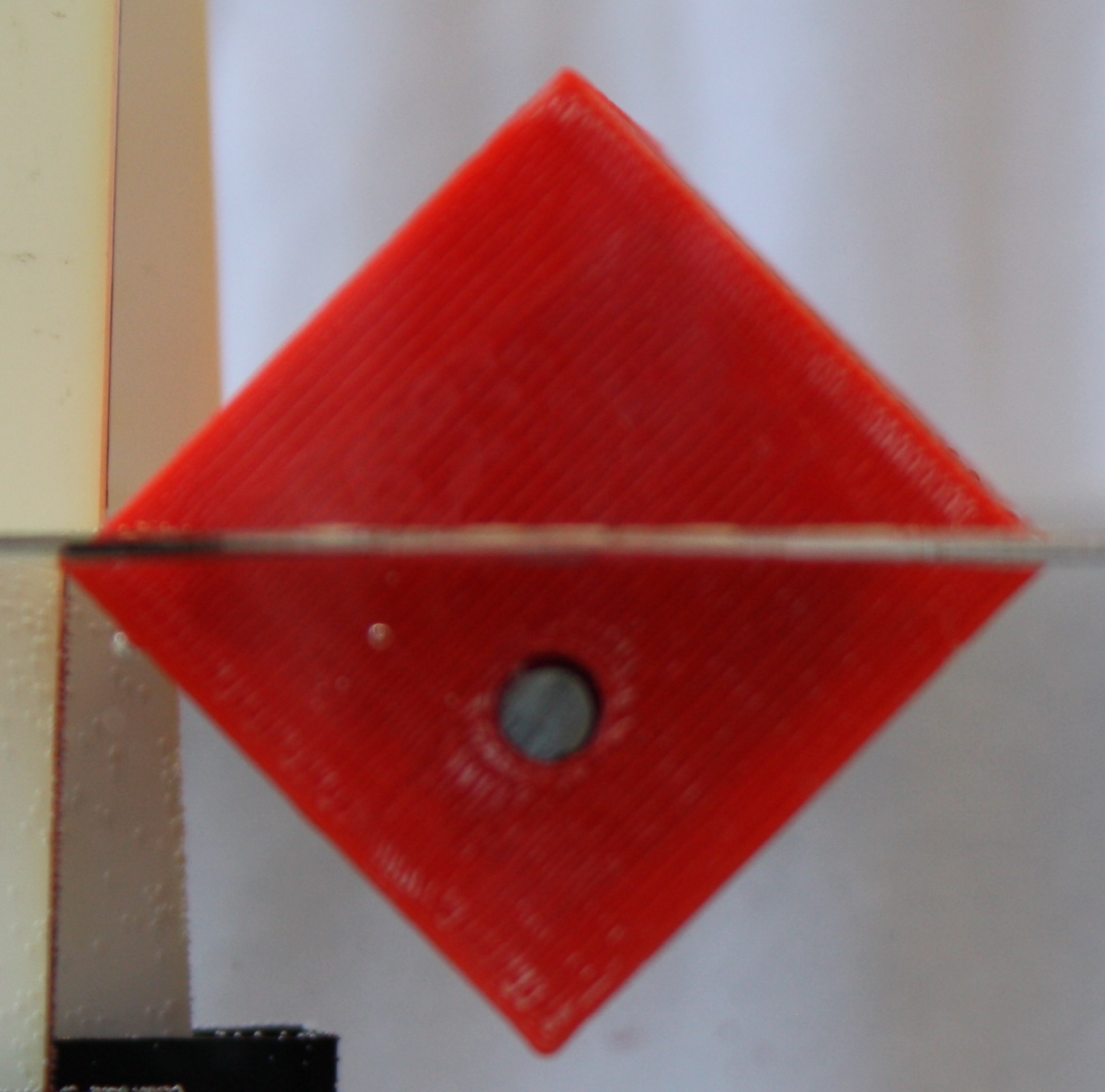}
\includegraphics[width=0.4\textwidth]{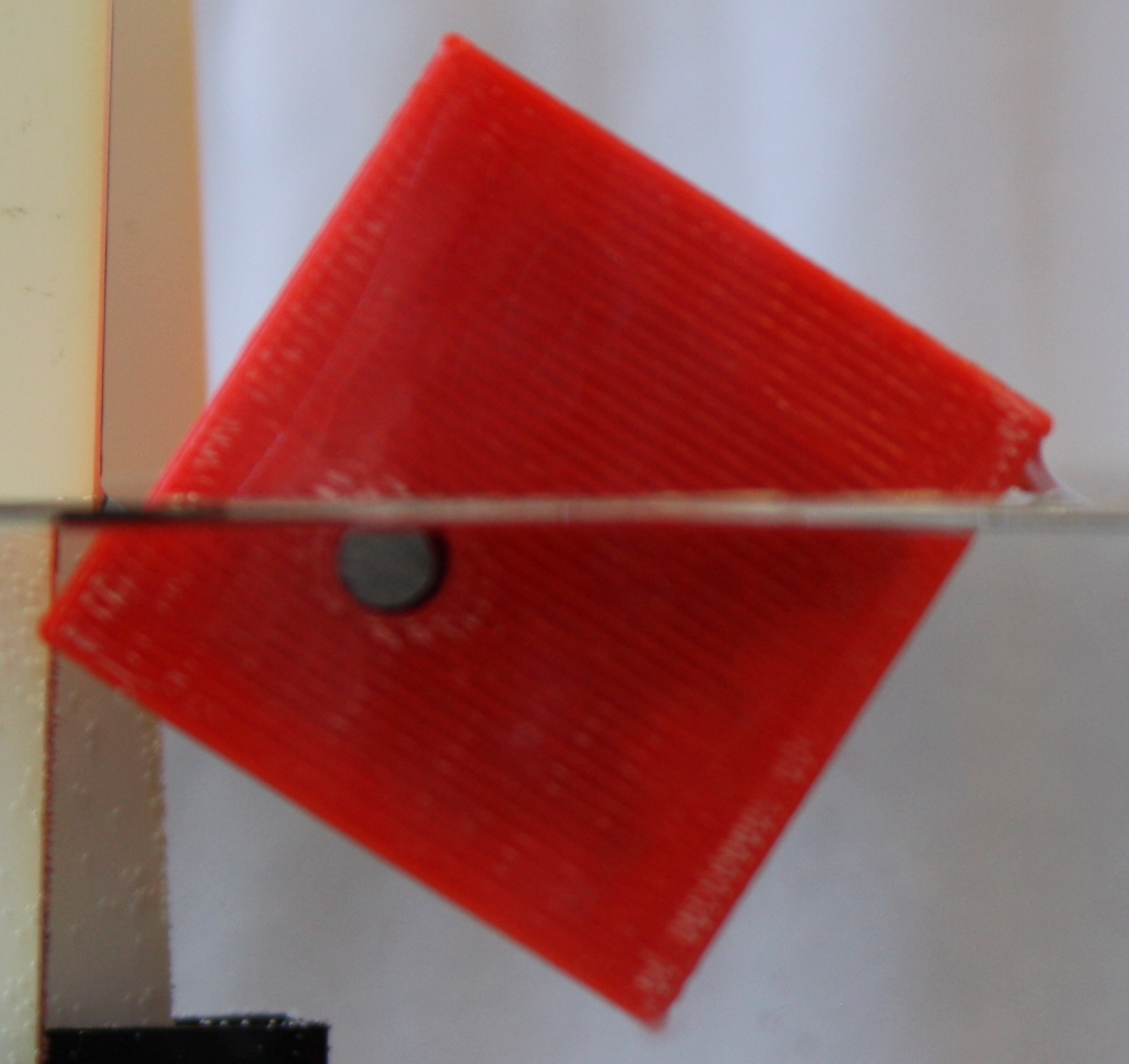}
\caption{Four floating orientations for Case {\bf B} in Figure~\ref{fig-PEplots31-32-33} with a nail-filled hole at $(0.3,0.3)$ giving a nonzero center of gravity $\vec{G}$.  
When the hole is {\em up} or {\em down} the stable
orientations correspond to corner straight up.  When the hole is {\em left}
or {\em right} the top corner of the square is deflected counterclockwise or
clockwise, respectively.}
\label{fig-EXP_plots_32nail_cropped}
\end{figure}

As we have just observed, when $\vec{G} \neq 0$ the symmetry of the square is
broken and the predicted number of stable equilibria change.  We display this in
Figure~\ref{fig-THETA_vs_R_32nail} for four cases $\vec{G} = (0.01,0.01)$,
$(0.02,0.02)$, $(0.04,0.04)$, and $(0.06789,0.06789)$.  On the last two of these
we have plotted the four experimentally-observed stable orientation angles for the print denoted
Case {\bf B} with a nail-filled hole at $(0.3,0.3)$.  As just discussed for Case {\bf B}, 
our theory predicts $\vec{G}=(0.06789,0.06789)$ and only two equilibrium values
of $\theta$ exist for this case as shown in the lower right plot of Figure~\ref{fig-THETA_vs_R_32nail}.
Reduction in the value of $\vec{G}$ to $(0.04,0.04)$, for example as indicated in the lower
left plot of Figure~\ref{fig-THETA_vs_R_32nail} shows four stable orientations at
the density ratio, $R=0.4874$, of Case {\bf B}.  The upper two plots corresponding
to $\vec{G} = (0.01,0.01)$ and $(0.02,0.02)$.  Especially in comparison to 
the predicted stable angles shown in Figure~\ref{fig-THETA_VS_R_zeroG} these plots reveal
interesting bifurcation structure as the square symmetry is broken.  Note that 
in addition to floating squares with four or eight stable orientations, when the symmetry
is broken in this way (moving the center of gravity towards a corner) there are also 
situations where either three or six stable orientations are predicted.

\begin{figure}[tb]
\includegraphics[width=0.45\textwidth]{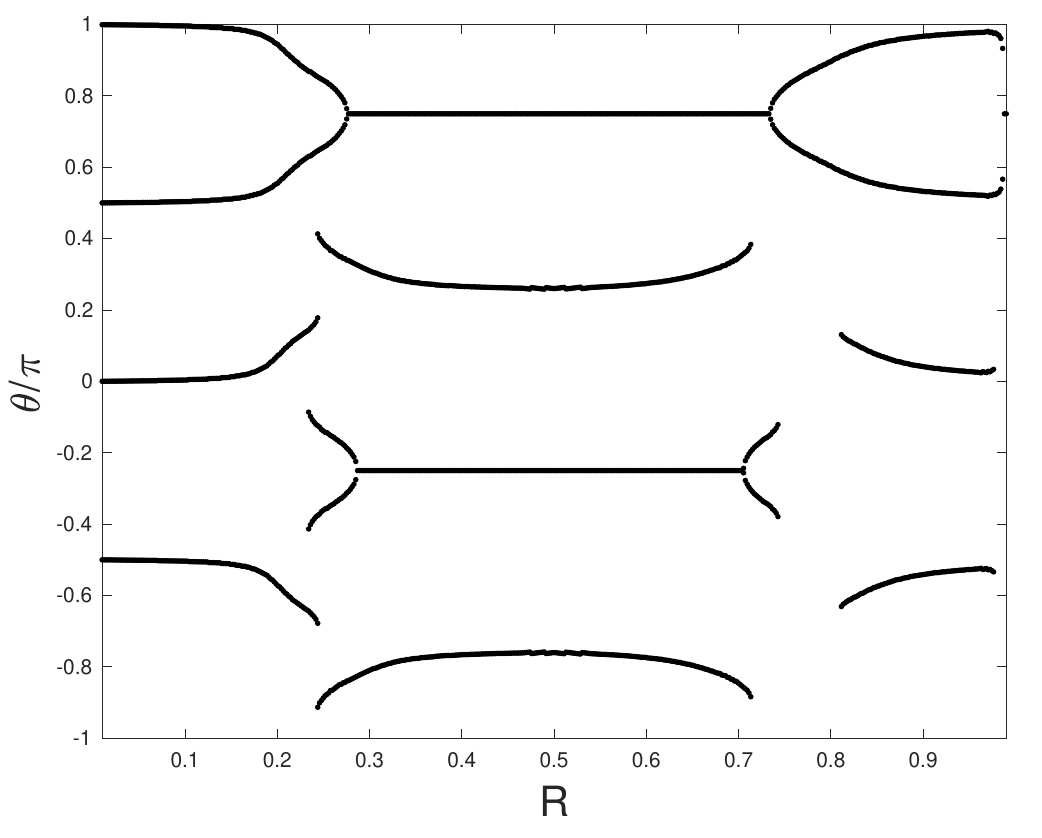}
\includegraphics[width=0.45\textwidth]{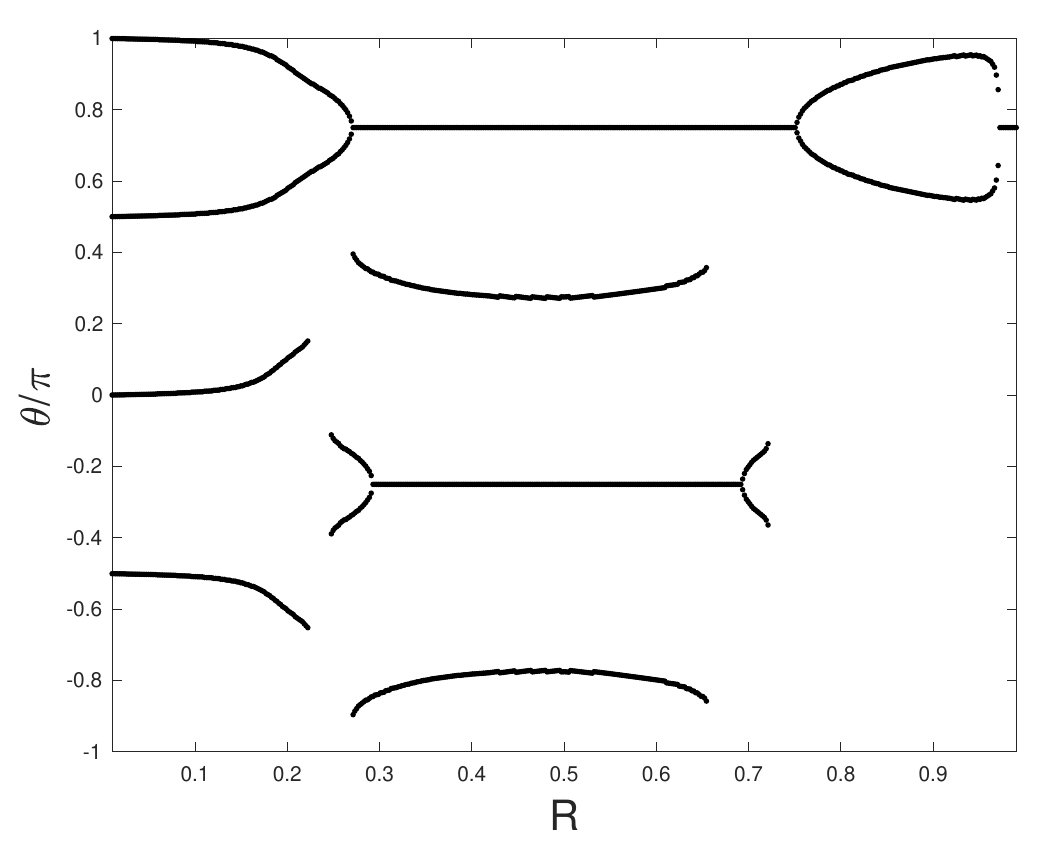} \\ 
\includegraphics[width=0.45\textwidth]{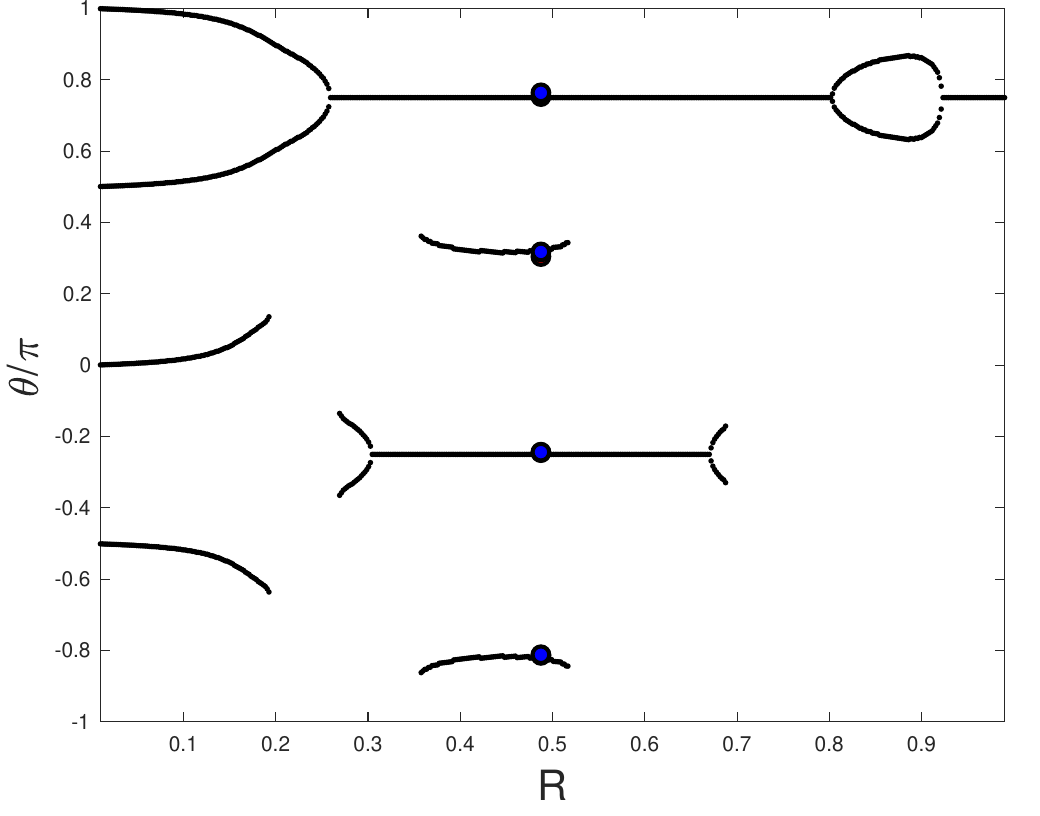}
\includegraphics[width=0.45\textwidth]{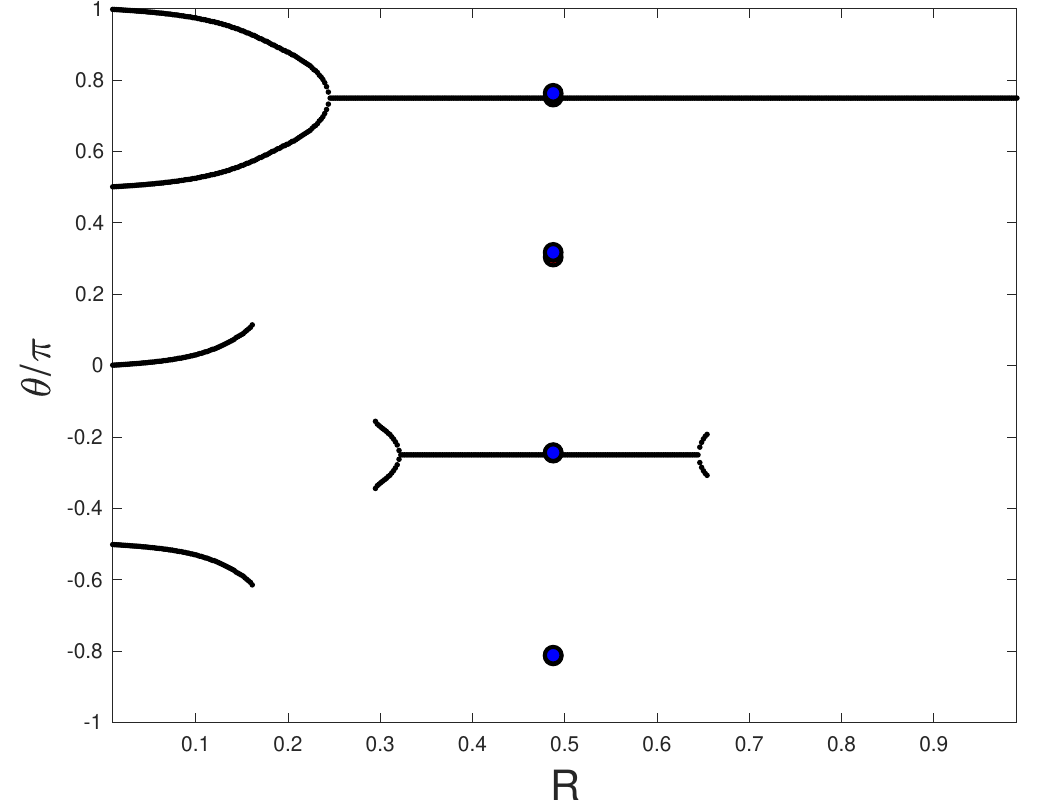} \\ 
\caption{Stable equilibrium angles versus density ratio $R$ for off-center squares.    These plots have $\vec{G} = (0.01,0.01)$ (upper left), $(0.02,0.02)$ (upper right), 
$(0.04,0.04)$ (lower left), and $(0.06789,0.06789)$ (lower right).  
Experimentally-measured equilibrium angles for Case {\bf B} corresponding to $R=0.4874$,
described in the text and in Figures~\ref{fig-PEplots31-32-33} and~\ref{fig-EXP_plots_32nail_cropped}, 
are shown in both the lower left and lower right plots.  
The full sequence of plots in this figure should be compared to the one for the symmetric square in Figure~\ref{fig-THETA_VS_R_zeroG} which has $\vec{G}=(0,0)$.}
\label{fig-THETA_vs_R_32nail}
\end{figure}


\subsection{Results for General Polygonal Cross Sections: The Mason M}

As an example of a floating shape with nontrivial cross sectional area we chose the Mason M.
This print was made based on a counter-clockwise-oriented set of points describing the shape of the M.

The cross-sectional area of the Mason M, $A_M$, was obtained using Matlab's {\tt
polyarea.m} applied to the point set described above. This value adjusted by a pixel
to millimeter (mm) conversion factor used to scale for 3D printing gave the
area. In particular, we found that $A_M = 126,550 \cdot (0.06)^2 \mbox{ mm}^2 = 455.58$
mm$^2$.  The length of our Mason M was $70$ mm, giving a volume of $31,891$
mm$^{3}$. Since its mass was $27.92$g its corresponding effective density  is
$\rho_{\rm eff} = 0.8755$ g cm$^{-3}$, which, in comparison to the density of water, is fairly close to the typical density ratio of an iceberg in a
polar sea.
Figure~\ref{mason} shows the potential energy landscape along with predicted stable configurations 
for the Mason M. 
The four stable orientations of the floating Mason M are shown in  Figure~\ref{masonfloat}. 

%
\begin{figure}[tb]
\includegraphics[width=0.8\textwidth]{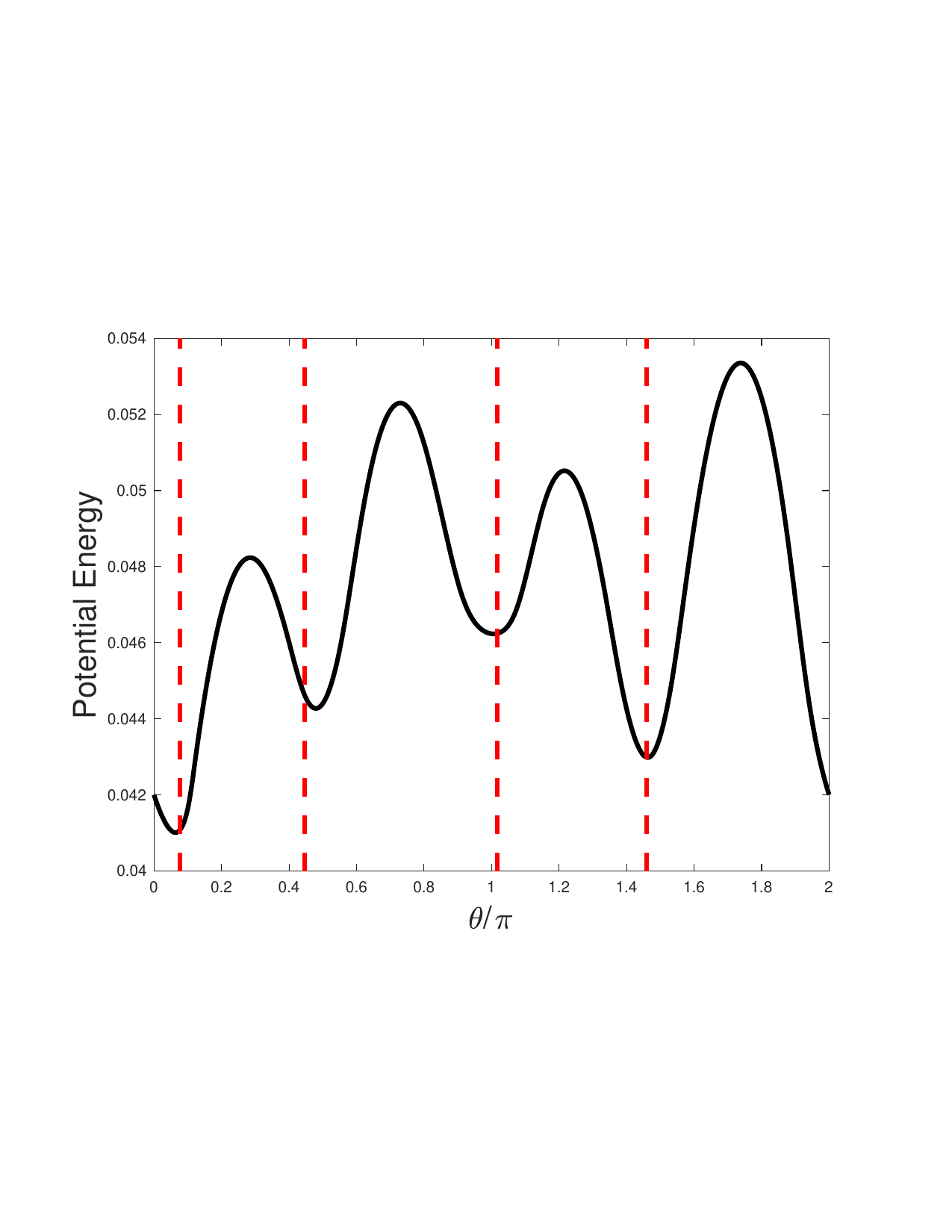}\vspace{-0.95in}
\caption{The Mason M potential energy plot where potential energy depends on the orientation angle of the waterline with respect to the 3D Mason M Print (zero angle corresponds
to an M in its usual upright orientation). Each vertical red dotted line corresponds to an experimentally found stable orientation. The local minima of the graph define stable orientations theoretically calculated using our code.
See also Figure~\ref{masonfloat}.}
\label{mason}
\end{figure}
\begin{figure}[tb]
\includegraphics[width=0.4\textwidth]{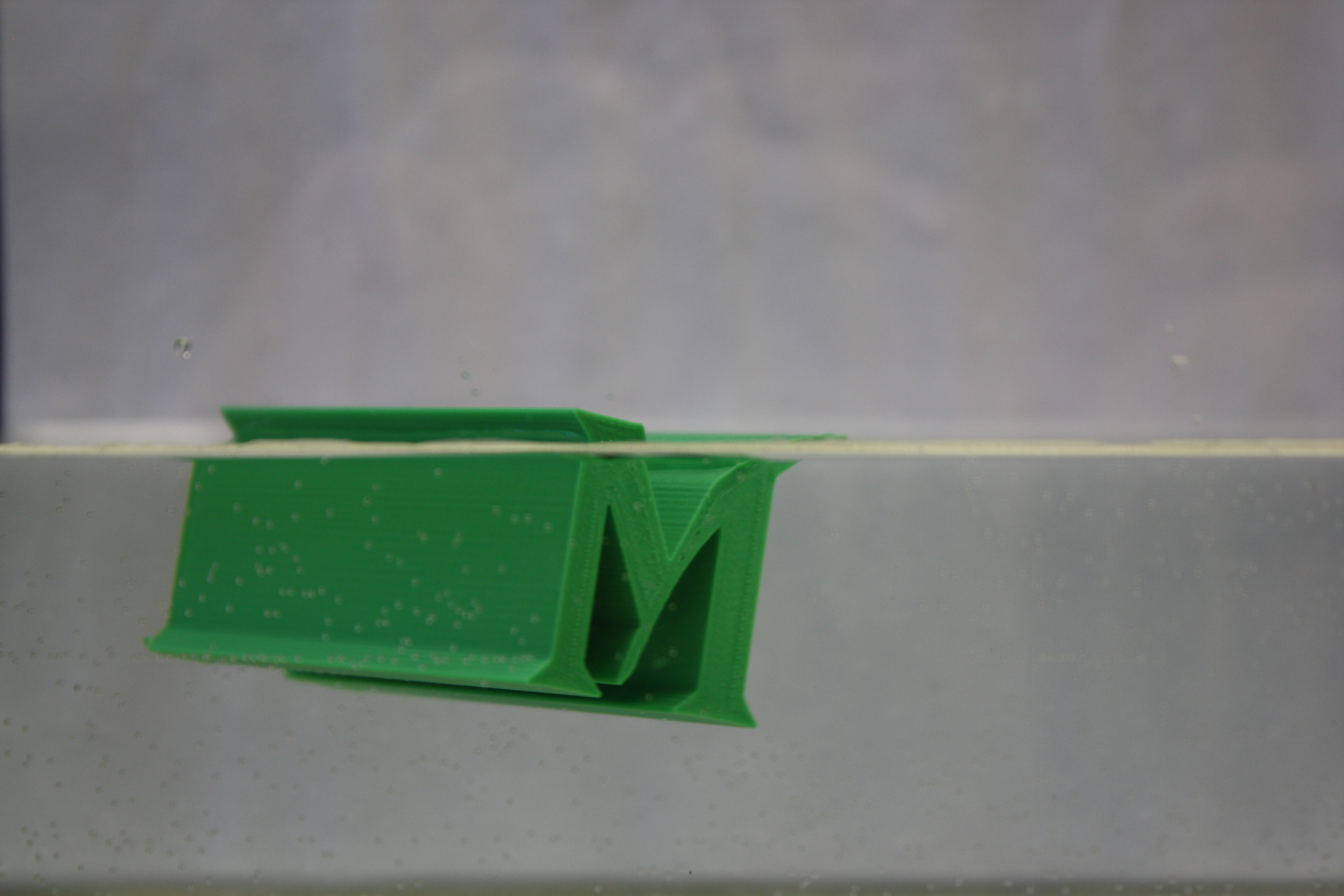}
\includegraphics[width=0.4\textwidth]{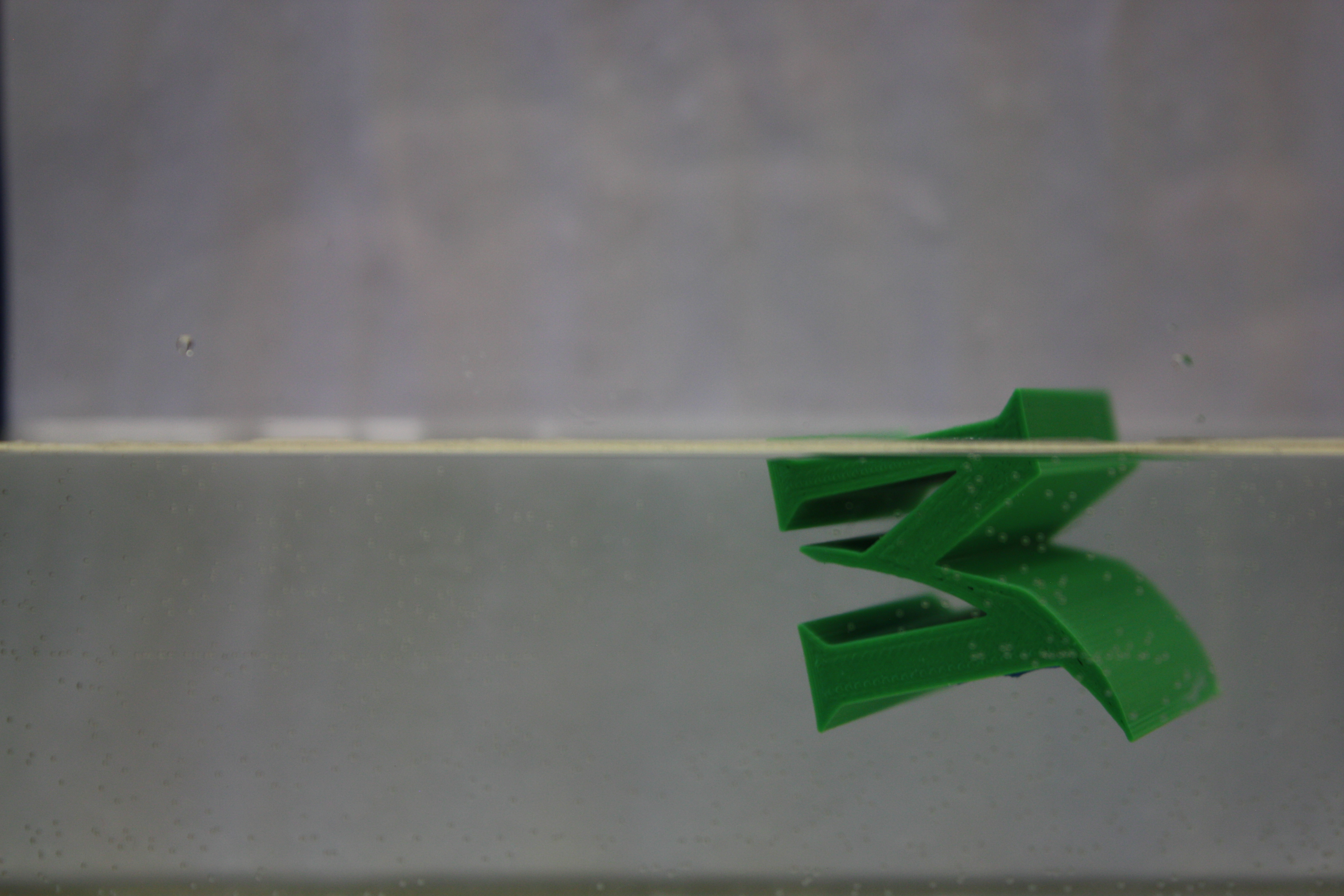} \\
\includegraphics[width=0.4\textwidth]{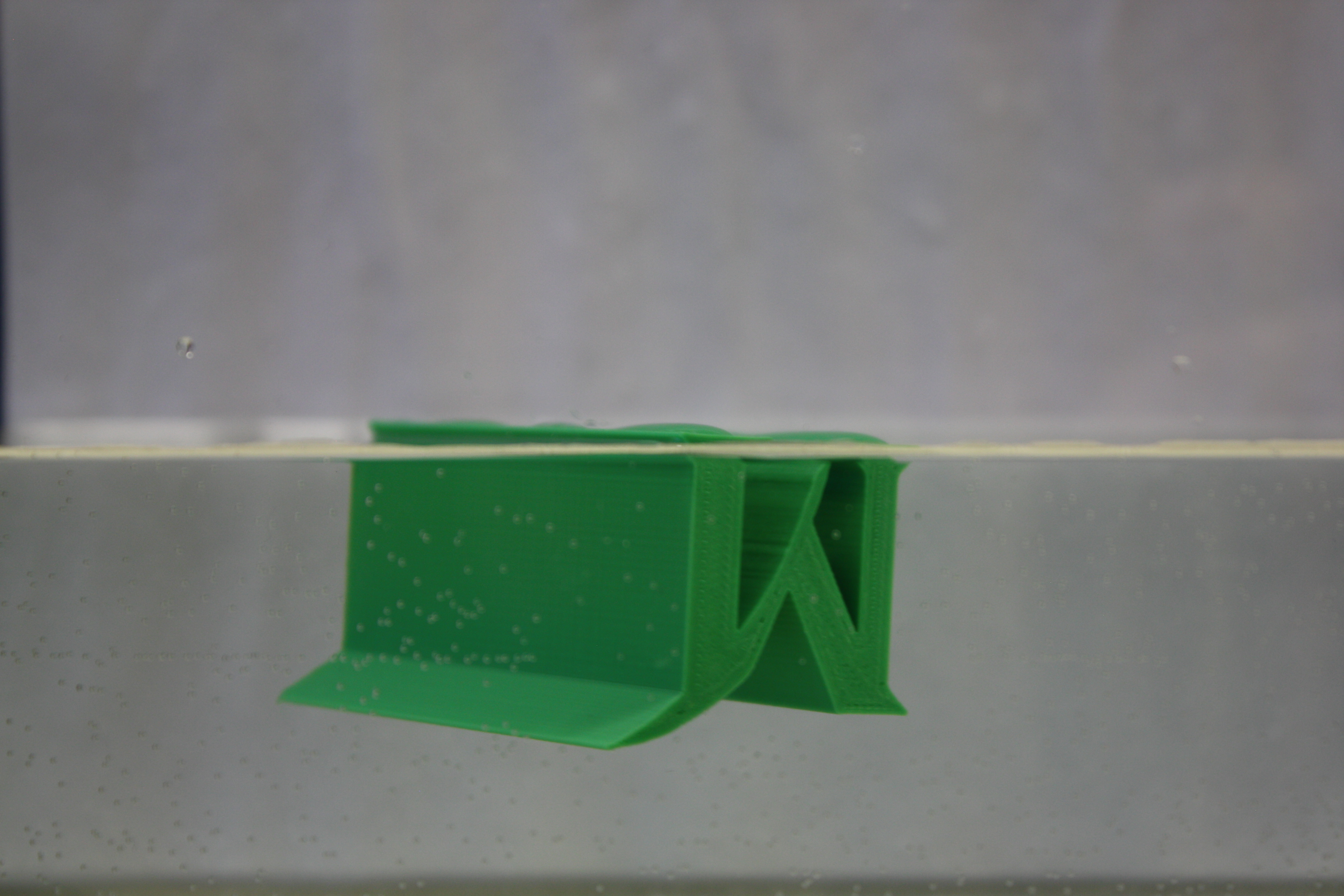}
\includegraphics[width=0.4\textwidth]{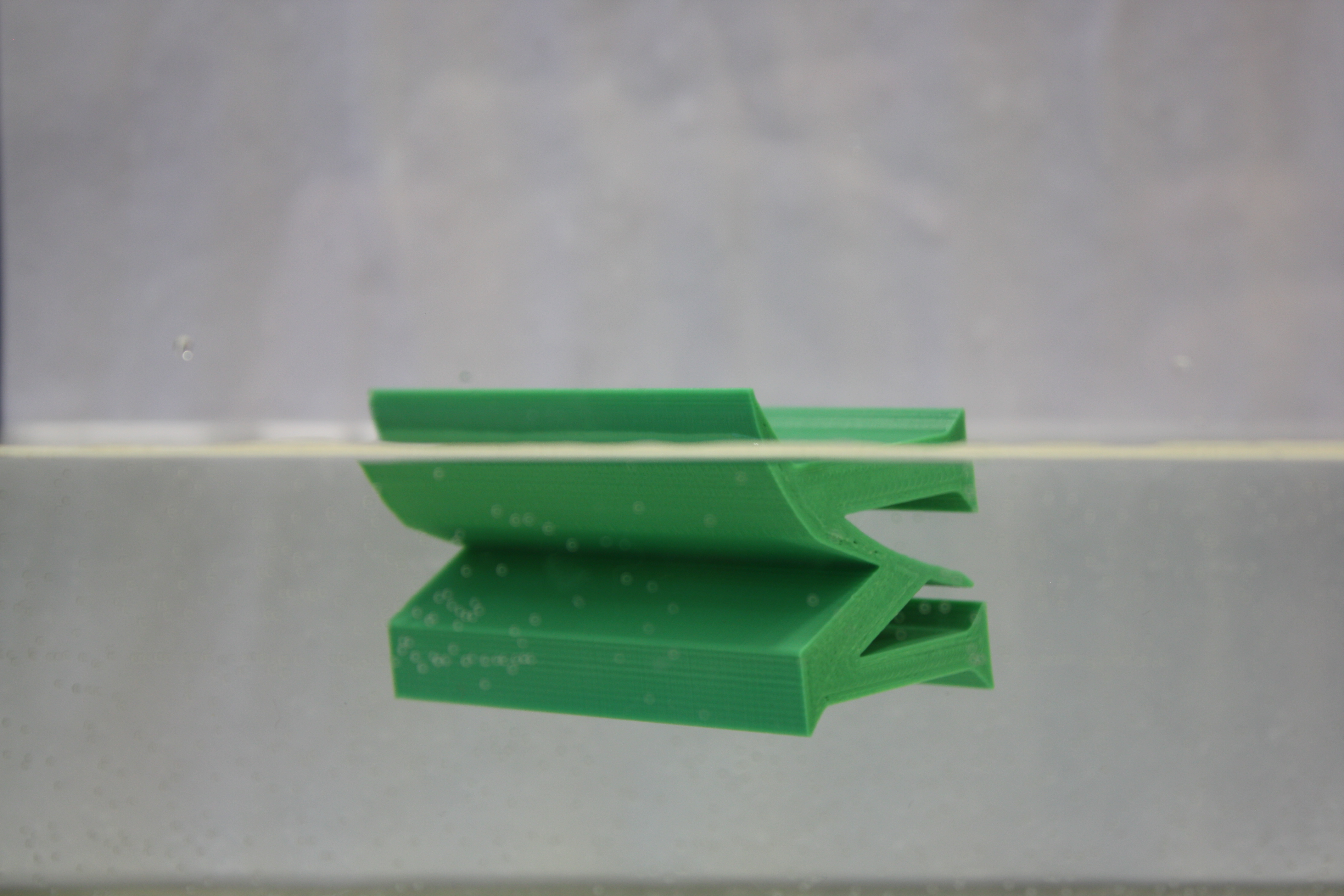}
\caption{The four stable orientations of the floating Mason M. The top left orientation matches up with the left most red dotted line in Figure~\ref{mason}. The top right orientation matches up with the second left-most red-dotted line in Figure~\ref{mason}. The bottom left orientation matches up with the second-right most red dotted line in Figure~\ref{mason}. The bottom right orientation matches up with the right-most red dotted line in Figure~\ref{mason}. }
\label{masonfloat}
\end{figure}

Figure~\ref{calving_MasonM} shows another variation of the full Mason M (with feathers) inspired by calving events that can happen with real icebergs.
These two plots show stable floating orientations, computed using the general polygonal code, where a {\em calving } event of shedding a feather results in a 
new stable floating orientation (right plot). The red line in the plot on the right shows the pre-calving waterline position. 

\begin{figure}[tb]
\includegraphics[width=0.4\textwidth]{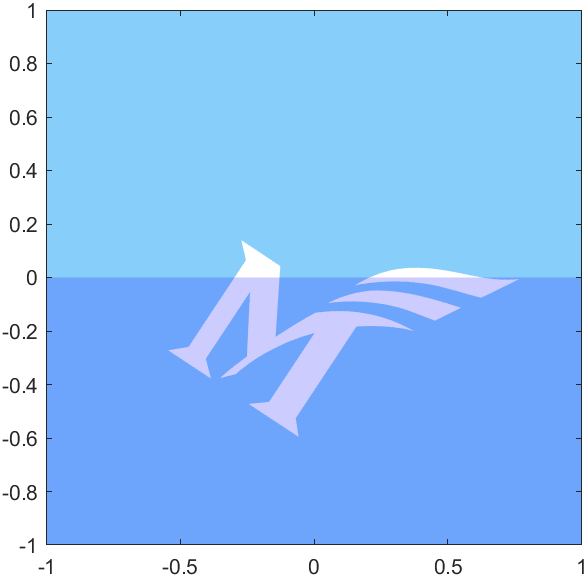}
\includegraphics[width=0.398\textwidth]{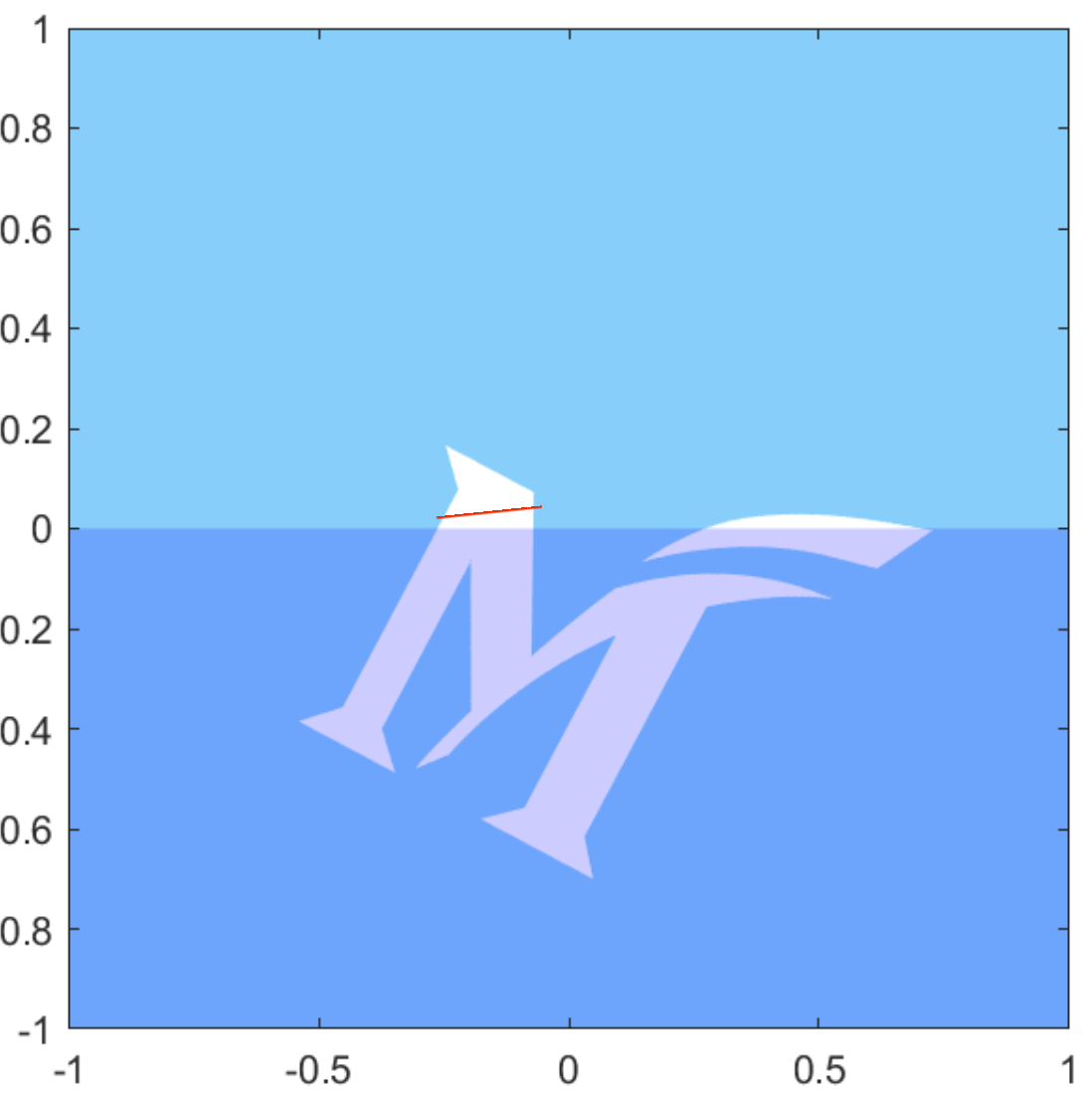}
\caption{The full Mason M (with feathers) and its floating orientation after a {\em calving} event where the upper feather is removed.  The red line shows pre-calving waterline.}
\label{calving_MasonM}
\end{figure}
%


\section{Conclusion and Open Problems}\label{sec:conclusion}

\subsection{Motivation}

 The equilibria of floating bodies have been studied since antiquity. Results on
 their stability associated with metacentric concepts that date back to
 eighteenth-century continue to play a major role in areas of naval architecture
 and the study of icebergs.

The field has recently gained mathematical interest, especially on Ulam's
floating body problem. In problem 19 from the Scottish Book Ulam asks: 
``Is a solid of uniform density which will float in water in every position a sphere?"
 \cite{ScottishBook}.
Counterexamples to Ulam's problem in the plane for $\rho = 1/2$ go
back to Auerbach \cite{Auerbach1938}, and for $\rho \neq 1/2$ to Wegner who also
obtained results for non-convex bodies (holes in the body are allowed) in
\cite{Wegner2020}. Most recently Florentin {\it et al.} \cite{Florentin2022}
gave an affirmative answer for a class of origin symmetric n-dimensional convex
bodies with $\rho = 1/2$ relative to water. While counterexamples in the
plane exist, Ulam's floating body problem in higher dimensions is open to the best of our 
knowledge.

\subsection{Limitations}

Using 3D printed models we were able to produce and collect experimental results
in agreement with the theory. Our results and experimentation process can be
reproduced readily, increasing access to 3D printers within educational
facilities gives way for students to experiment with stability and generate
different scenarios including Ulam's floating body problem.

 The biggest challenge faced when producing our 3D samples is related to
 density. To design infill patterns and infill densities using computer-aided
 design(CAD), we produced samples to measure precision and increase the accuracy
 of our experiment and identify possible sources of experimental errors. A
 limitation in the 3D printing process is the difficulty in producing
 low-density objects to desired accuracy.  To maintain an even distribution of
 mass, considerations must be taken with respect to the infill pattern. Zigzag,
 grid, triangle, and concentric patterns are recommended when an even
 distribution is desired. While Makerbot's cat infill model is not recommended
 for experimenting with homogenous objects as it is asymmetric, it made for
 interesting observations and ultimately motivated further investigation into
 non-homogeneous objects, and it's also really cute, cf. Figure~\ref{catinfill}.
 
 \begin{figure}[tb]
\includegraphics[width=0.3\textwidth]{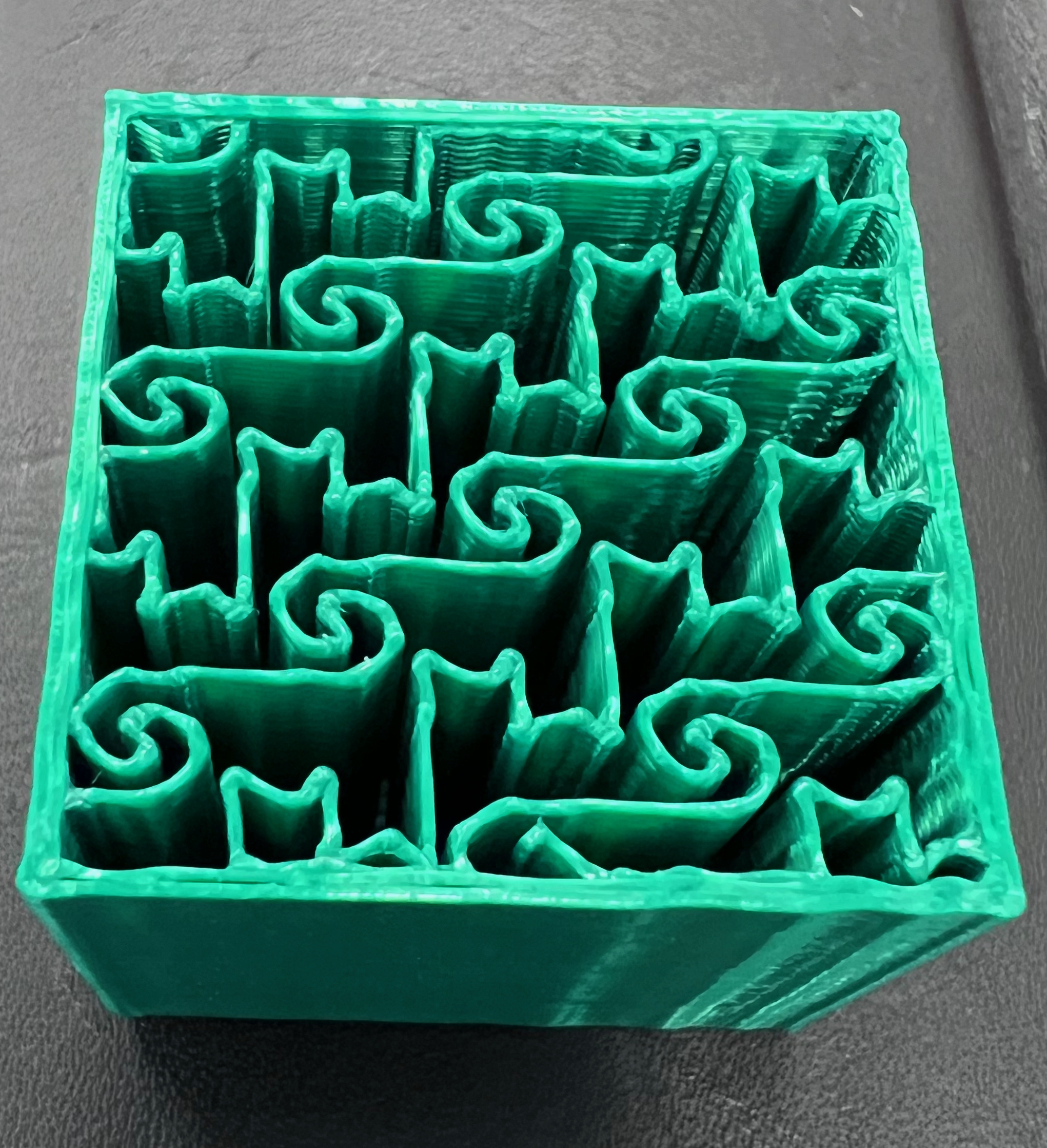}
\caption{The Makerbot cat infill.  }
\label{catinfill}
\end{figure}

 Our work here has focused on effectively two-dimensional shapes.   
 Mathematical challenges for 3D floating shapes have been examined 
 (e.g.~Erd\"{o}s {\it et al.} \cite{Erdos_etal1992_part2} and Wegner \cite{Wegner2009}) and of course most realistic floating objects such as
 icebergs are three dimensional.  Approaches using 3D Print design are likely to
 prove highly useful for future studies in these directions.

%
%
%
%

\section{Acknowledgements}

The authors would like to thank the referee for the helpful  comments that allowed us 
to improve this paper. 
We would also like to thank the Mason Experimental Geometry Lab (MEGL) and 
Mason's Math Maker Lab, particularly Maker Lab staff Patrick Bishop and MEGL member Will Howard, 
for supporting this project.  
We would also like to thank
the Department of Physics at George Mason University for the use of a digital scale. 
The research of E.S. was partially supported by the
Simons Foundation under Awards~636383.

%
%

\bibliographystyle{amsalpha}

\end{document}